\newglossaryentry{anterograde}
{name=anterograde,
        description={Direction of transport from the soma to the tip of a neurite}}
\newglossaryentry{endocytosis}
{ name=endocytosis,
        description={Internalization of an area of cell membrane as a vesicle}}
\newglossaryentry{exocytosis}
{name=exocytosis,
        description={Insertion of a vesicle into the plasma membrane}}
\newglossaryentry{GFP}
{name=GFP,
        description={Green fluorescent protein}}
\newglossaryentry{growth cone}
{name=growth cone,
        description={Dynamic cellular structure at the tip of neurites that contains cytoskeletal elements and vesicles}}
\newglossaryentry{microtubul}
{name=microtubuli, plural=microtubules,
        description={Cytoskeletal elements that serves as tracks for intracellular transport}}
\newglossaryentry{polarization}
{name=polarization,
        description={Establishment of an asymmetric organization of cells}}
\newglossaryentry{progenitor cell}
{ name=progenitor cell,
        description={Stem cell that generates neurons by cell division}}
\newglossaryentry{retrograde}
{name=retrograde,
        description={Direction of transport from the tip of a neurite to the soma}}
\newglossaryentry{soma}
{name=soma,
        description={Cell body of neuron}}
\newglossaryentry{Vamp2}
{name=Vamp2,
        description={Vesicular membrane protein}}
\newglossaryentry{Vamp2-GFP}
{name=Vamp2-GFP,
        description={Fusion protein of Vamp2 to GFP}}
\newglossaryentry{vesicle}
{name=vesicle,
        description={Organelle separating its contents from the cytoplasm by a membrane (lipid bilayer)}}
  \newcommand{\R}{\ensuremath{\mathbb{R}}}   
  \renewcommand{\epsilon}{\varepsilon}       
  \renewcommand{\phi}{\varphi}		         
  \newtheorem{satz}{Satz}[section]
  \newtheorem{lemma}[satz]{Lemma}
  \newtheorem{remark}[satz]{Remark}
  \theoremstyle{definition}
  \theoremstyle{remark}
\begin{document}

\title{On the Role of Vesicle Transport in Neurite Growth: Modelling and Experiments}
\author{Ina Humpert\thanks{Applied Mathematics M\"unster: Institute for Analysis and Computational Mathematics,
Westf\"alische Wilhelms-Universit\"at (WWU) M\"unster, Germany (ina.humpert@uni-muenster.de).}, \ \
Danila Di Meo$^\dag$, 
\\
Andreas W. Püschel\thanks{Institute for Molecular Biology, Westfälische-Wilhelms-Universität (WWU) Münster, Germany (d\_dime01@uni-muenster.de,apuschel@uni-muenster.de)}, \ \
Jan-Frederik Pietschmann\thanks{Technische Universität Chemnitz, Fakult\"at f\"ur Mathematik, Germany (jfpietschmann@math.tu-chemnitz.de).}
}
\maketitle
\begin{abstract}
\textbf{Abstract}
\vspace{1ex}
 \\ The processes that determine the establishment of the complex morphology of neurons during development are still poorly understood. We present experiments that use live imaging to examine the role of \gls{vesicle} transport and propose a lattice-based model that shows symmetry breaking features similar to a neuron during its \gls{polarization}. In a otherwise symmetric situation our model predicts that a difference in neurite length increases the growth potential of the longer neurite indicating that 
 \gls{vesicle} transport can be regarded as a major factor in neurite growth. 

\vspace{2ex}

\textbf{Keywords:} Neurite Growth, Vesicle Transport, Symmetry Breaking, Lattice-based Kinetic Models, Biologic Modelling, Cross Diffusion
\end{abstract} 
\section{Introduction}

Neurons are highly polarized cells with functionally distinct axonal and dendritic compartments. These are established during their development when neurons polarize after their generation from neural \gls{progenitor cell}s and are maintained throughout the life of the neuron \cite{namba_extracellular_2015}. Unpolarized newborn neurons from the mammalian cerebral cortex initially form several undifferentiated processes of similar length (called neurites) that are highly dynamic (\cite{cooper_cell_2013}, \cite{hatanaka_excitatory_2013}). During neuronal \gls{polarization}, one of these neurites is selected to become the axon. 

The aim of this paper is to combine experimental results with modelling to better understand the role of transport in this process. Indeed, while transport of vesicles in developing and mature neurons has been studied before \cite{PhysRevE.74.031910,PhysRevLett.114.168101,10.7554/eLife.20556}, to the best of our knowledge there are so far no models that examine its impact on neuronal polarization.

For the experiments we use primary cultures of embryonic hippocampal neurons which are widely used as a model system to study the mechanisms that mediate the transition to a polarized morphology \cite{schelski_neuronal_2017}. After attaching to the culture substrate, neurons extend multiple undifferentiated neurites that all have the potential to become an axon. Before neuronal polarity is established, these neurites display randomly occurring periods of extension and retraction (\cite{cooper_cell_2013}, \cite{winans_waves_2016}). Upon \gls{polarization}, one of the neurites is specified as the axon and elongates rapidly (\cite{namba_extracellular_2015}, \cite{schelski_neuronal_2017}). This neurite has to extend beyond a minimal length to become an axon (\cite{dotti_experimentally_1987}, \cite{goslin_experimental_1989}, \cite{yamamoto_differential_2012}). 

The extension and retraction of neurites depends on cytoskeletal dynamics and the exo- and \gls{endocytosis} of \gls{vesicle}s (\cite{pfenninger_plasma_2009}, \cite{schelski_neuronal_2017}, \cite{tojima_exocytic_2015}). The growth of neurites and axons requires an increase in the surface area of the plasma membrane by the insertion of \gls{vesicle}s in a structure at the tip of the developing neurite which is called the \gls{growth cone}. Retraction, on the other hand, is accompanied by the removal of membrane through \gls{endocytosis} (\cite{pfenninger_regulation_2003}, \cite{pfenninger_plasma_2009}, \cite{tojima_exocytic_2015}). The material for membrane expansion is provided by specialized \gls{vesicle}s that are characterized by the presence of specific vesicular membrane proteins (\cite{gupton_integrin_2010}, \cite{quiroga_regulation_2018}, \cite{urbina_spatiotemporal_2018}, \cite{wang_lgl1_2011}). The bidirectional transport of \gls{vesicle}s along neurites provides material produced in the cell body and recycles endocytosed membranes \cite{lasiecka_mechanisms_2011}. Molecular motors transport organelles along \glspl{microtubul} in the \gls{anterograde} direction towards the neurite tip (kinesins) and \gls{retrograde}ly to the \gls{soma} (dynein). The nascent axon shows a higher number of organelles compared to the other neurites due to a \gls{polarization} of intracellular transport to provide sufficient material for extension (\cite{bradke_neuronal_1997}, \cite{schelski_neuronal_2017}). The net flow of \gls{vesicle}s into a neurite, thus, has to be regulated depending on changes in neurite length but it is not known how intracellular transport is adjusted to differences in the demand for \gls{vesicle}s in growing or shrinking neurites. 

Based on these findings, we aim to obtain a better understanding of the role of vesicle transport in the polarization process by means of modelling. We propose a lattice based approach for the transport of the vesicles between soma and growth cones. We model antero- and retrograde vesicles as two different types of particles that are located on a discrete lattice. To account for the limited space, we propose a maximal number of vesicles that can occupy one cell, see also Figure \ref{fig:2BScetchModelNeuronLattice}. Particles randomly jump to neighbouring cells with a rate that is proportional to a diffusion coefficient and (the relative change) of a given potential. Furthermore, only jumps into cells which are not yet fully occupied are allowed. This is closely related to so-called (asymmetric) exclusion processes, see \cite{DERRIDA199865} and the references therein. Finally, at each end of the lattice, we introduce a pool that represents either the vesicles present in the soma or, at the tip, those in the growth cones.

Lattice based models (also called cellular automata) of this type are used in many applications, ranging from transport of protons, \cite{kier2013cellular}, to the modelling of human crowds, \cite{ali2013modeling}. They also serve as a tool to understand the fundamental characteristics of systems with many particles, \cite{simpson.hughes.ea:diffusion}. Models with different species and size exclusion have also been studied, see \cite{evans1995asymmetric,Pronina_2007} and in particular \cite{PhysRevE.97.022105} which deals with the case of limited resources.

Here, we develop a model adopted to neuron polarization and present numerical simulations that analyze the relation between \gls{vesicle} transport and neurite length. We are able to show that an initial length advance of a single neurite leads to further asymmetries in the \gls{vesicle} concentration in the pools.

We also present a system of nonlinear partial differential equations that arises as a (formal) limit from the cellular model and briefly discuss its properties.

While our model still lacks major features of neurite growth, the presented results show high correspondence with real data: In live cell imaging experiments a neurite that has exceed a critical length during the \gls{polarization} process grows rapidly becoming the future axon. In our model we supply one neurite with an initial length advantage and the dynamics of the model result in a positive feedback that further increases its length advance indicating that it becomes the future axon. We also observe oscillations in the vesicle concentration in the pools that may be interpreted as cycles of extension and retraction.

This paper is organized as follows: In section 2 we explain the biological background of the paper. In section 3 we introduce a discrete model for the numeric simulations and in section 4 we present the corresponding macroscopic cross diffusion model.
In section 5 we preformed the numerical simulations and interpreted the results. 
Finally, in section 6 we give a conclusion. 

\section{Experimental Results and Consequences for the Modelling}\label{sec:biological}
This section contains the results of live cell imaging of primary neurons that were prepared from rat embryos and a brief discussion of the consequences for the mathematical modelling. The final model with all details is then presented in section \ref{sec:modelling}.


\subsection{Description of Growth Process}

Unpolarized neurons extend several undifferentiated neurites called minor neurites. Upon \gls{polarization}, one of the minor neurites is specified as the axon and growths rapidly, see Figure \ref{fig:BioFigureAB} a). 
As explained in the introduction, this increase in the length of a neurite depends on the
sum of the surfaces of all \gls{vesicle}s that fuse with the plasma membrane of the \gls{growth cone}.
Their intracellular transport from the soma to the tip of the neurite is driven by molecular motors that transport the \gls{vesicle}s along \glspl{microtubul}, see Figure \ref{fig:BioFigureAB} b).

\begin{figure}
	\begin{center}
		\begin{tabular}{cc}
			\includegraphics[width=0.35\textwidth]{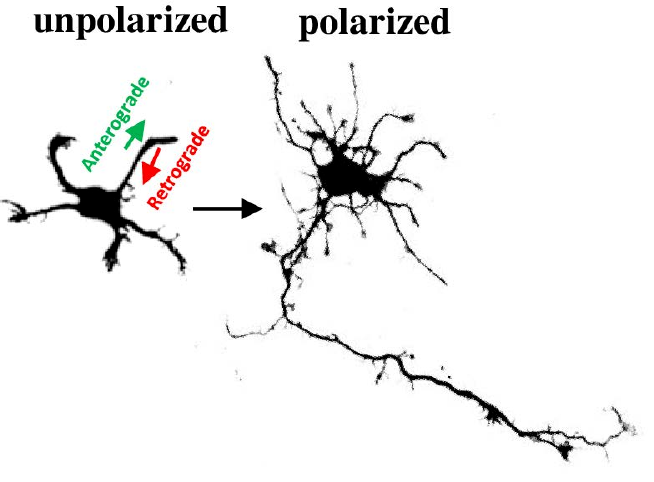} 
 			& \includegraphics[width=0.6\textwidth]{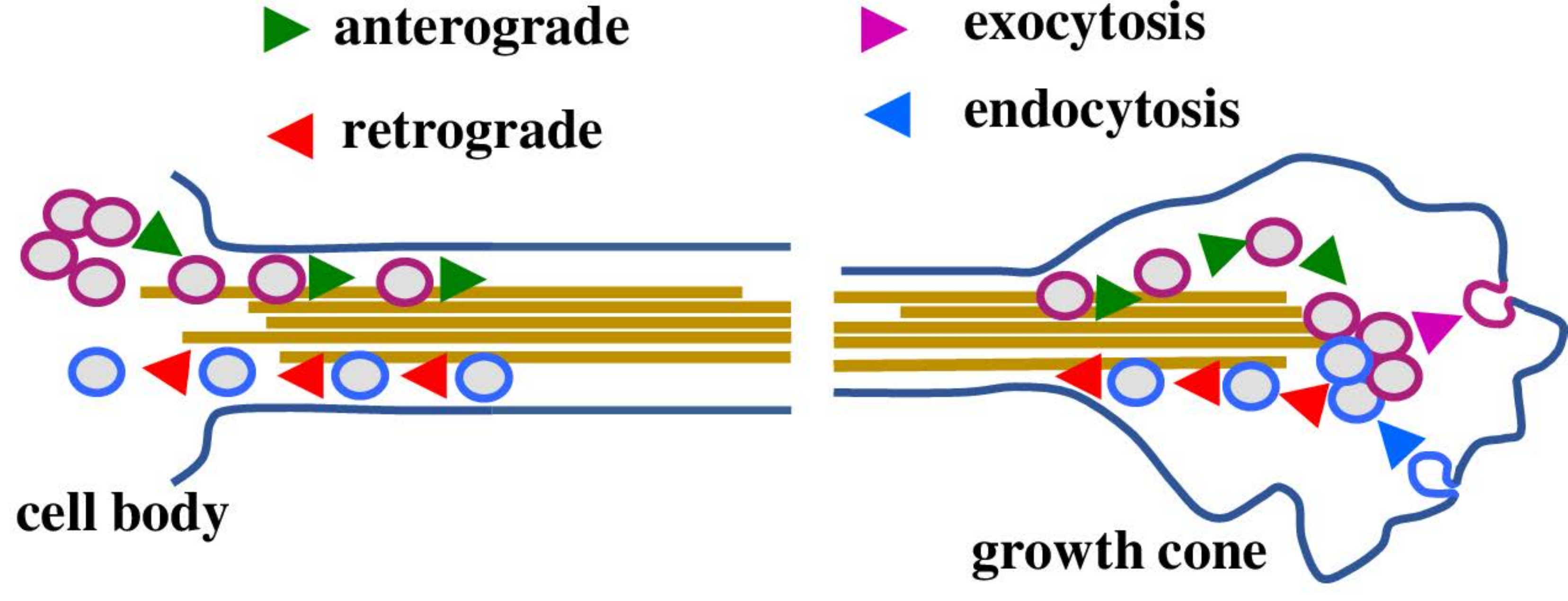} 
 			\\
			(a) &(b)
		\end{tabular}
	\end{center}
	\caption{\textbf{Vesicle transport in unpolarized neurons:}
a) Schematic representation of neuronal \gls{polarization} in primary cultures of neurons. 
b) Schematic representation of intracellular transport in neurons. Vesicles are transported in the \gls{anterograde} direction (green) from the cell body (\gls{soma}) into the neurite towards the tip of the neurite along \glspl{microtubul} (yellow). Vesicles are inserted into the plasma membrane by \gls{exocytosis} in the \gls{growth cone} at the tip of the neurite to promote extension. Vesicles generated by \gls{endocytosis} are recycled or transported \gls{retrograde}ly to the \gls{soma} (red).
}
	\label{fig:BioFigureAB}
\end{figure}

\subsection{Exterimental Methods and Results}

\subsubsection{Preparation of neurons}
Culture of primary hippocampal neurons and DNA constructs
Hippocampal neurons were isolated from rat embryos at day 18 of development (E18), plated at a density of 130,000 cells per constructs 35 mm dish ($\mu$-Dish, Ibidi) coated with poly-L-ornithine (SigmaAldrich) and cultured at $37^\circ\text{C}$ and 5\% CO$_2$ for one day in BrightCell$^\text{TM}$ NEUMO Photostable medium (Merck Millipore) with supplements. Neurons were transfected with an expression vector for Vamp2-GFP by calcium phosphate co-precipitation as described previously \cite{shah_rap1_2017}.
The \textit{pDCX-Vamp2} vector was constructed by replacing \gls{GFP} in \textit{pDcx-iGFP} (provided by U. Müller, Scripps Research Institute, La Jolla, CA, USA; \cite{franco_reelin_2011}) by a new multiple cloning site (5’- GAATTC ACTAG TTCTA GACCC GGGGG TACCA GATCT GGGCC CCTCG AGCAA TTGGC GGCCG CGGGA TCC-3’) and \gls{Vamp2-GFP} as an XbaI and BglII fragment from \textit{pEGFP-VAMP2}(addgene, \#42308), see \cite{Martinez-Arca9011}.

\subsubsection{Live cell imaging} 
Time-lapse imaging was performed in an incubation chamber one day after transfection at $37^\circ\text{C}$ and 5\% CO$_2$ using a Zeiss LSM 800 laser scanning confocal microscope (Carl Zeiss MicroImaging, Jena, Germany) and the Zeiss ZEN Software (Carl Zeiss MicroImaging). Images were taken at a frame rate of one scan per second for 2 minutes, followed by a pause of 20 minutes for 2 to 12 hours (see Figure \ref{fig:BioFigureCD}). The number, velocity and direction of \gls{vesicle} movement were quantified using the ImageJ macro toolsets \textit{KymographClear} and \textit{KymoAnalyzer} (\cite{neumann_kymoanalyzer:_2017}, \cite{mangeol_kymographclear_2016}). 

\begin{figure}[t]
	\begin{center}
		\begin{tabular}{c}
			\includegraphics[width=0.95\textwidth]{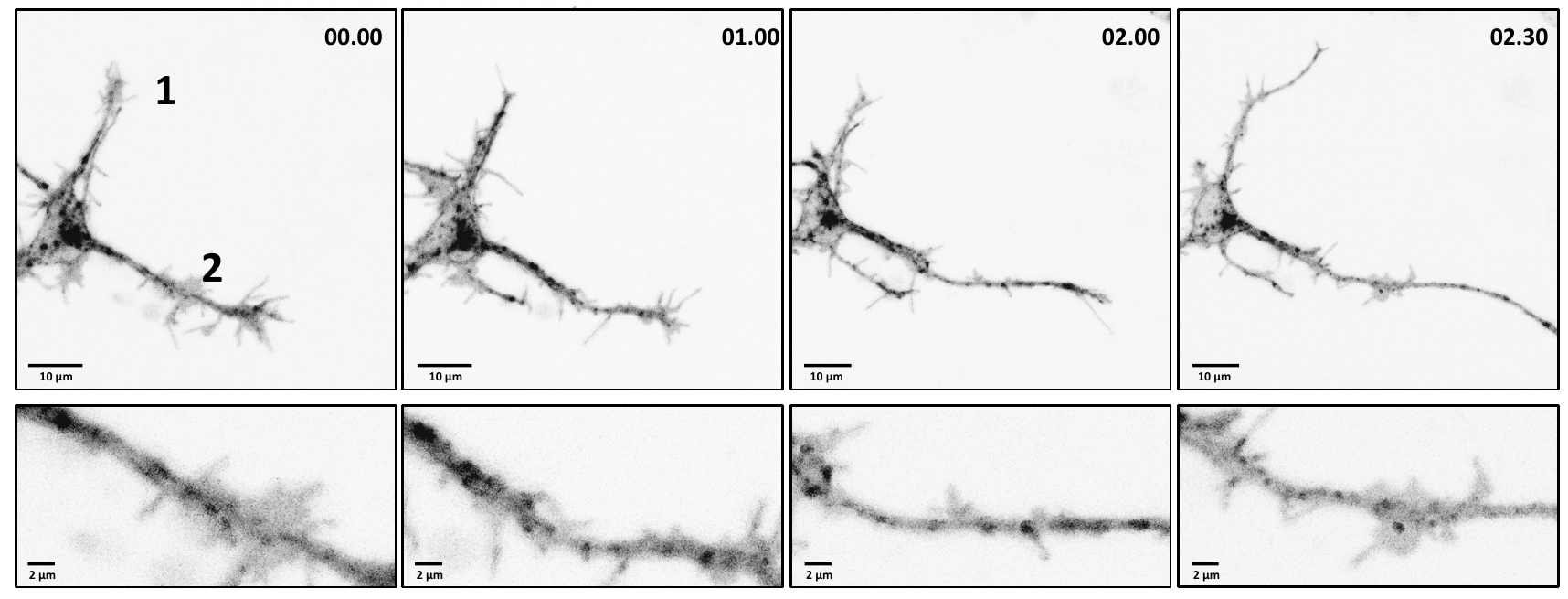} \\
			(a)
			\\
 			\includegraphics[width=0.95\textwidth]{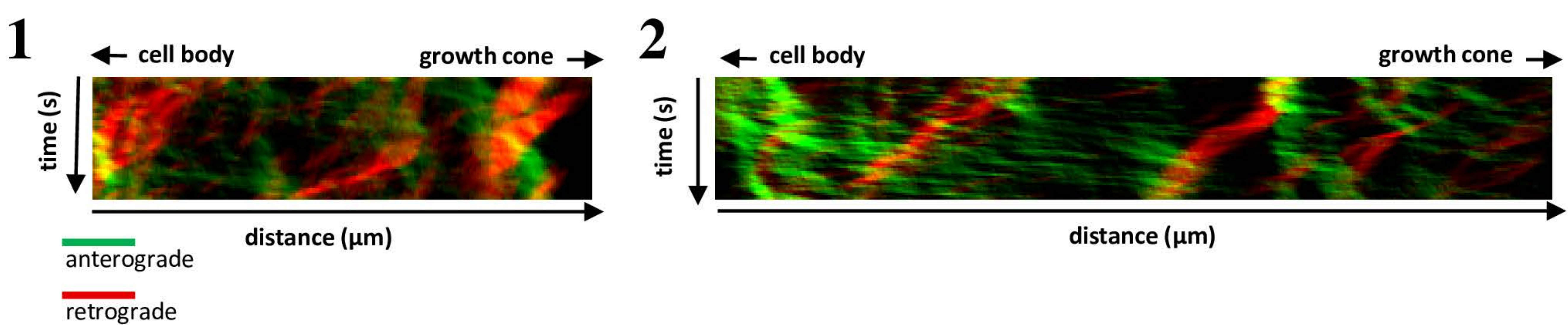} 
 			\\
 			(b)
		\end{tabular}
	\end{center}
	\caption{\textbf{Quantification of \gls{vesicle} transport in unpolarized neurons:}
a) Neurons from the hippocampus of E18 rat embryos were transfected with an expression vector for \gls{Vamp2-GFP} and the transport of Vamp2-GFP-positive \gls{vesicle}s analyzed by live cell imaging. The distribution of \gls{Vamp2-GFP}-positive \gls{vesicle}s is shown at the indicated time points (hours) in unpolarized neurons. Neurite 1 first undergoes retraction before it extends again while neurite 2 extends during the whole imaging time. A higher magnification of neurite 2 is shown in the lower panel.   
b) Representative colour-coded kymograph displaying the trajectory of moving \gls{vesicle}s in the \gls{anterograde} (green) and \gls{retrograde} (red) direction. 
}
	\label{fig:BioFigureCD}
\end{figure}
\subsubsection{Statistical analysis}
Statistical analyses were done using the GraphPad Prism 6.0 software. Statistical significance was calculated for at least three independent experiments using the Wilcoxon Sign Rank test. Significance was defined as follows: If $p > 0.05$ we regard the results as not significant and if $^\star p< 0.05$ as significant, see Figure \ref{fig:BioFigureEF} (a). 

\subsubsection{Results}
Cultures of hippocampal neurons were transfected with an expression vector for \gls{Vamp2-GFP} as a marker for \gls{vesicle}s to analyze transport in the neurites of multipolar neurons \cite{gupton_integrin_2010}. Antero- and \gls{retrograde} movement of \gls{Vamp2-GFP} positive \gls{vesicle}s was analyzed by live cell imaging 24 hours after transfection before axons are specified. To determine if the transport rates of \gls{vesicle}s change when neurites extend or retract we determined the number of \gls{vesicle}s in the neurite that are immobile or move in the antero- or \gls{retrograde} direction (Figure \ref{fig:BioFigureEF}). Vesicle dynamics is markedly higher in neurites that undergo extension or retraction compared to those that do not show changes in length. A significant difference in \gls{vesicle} transport was observed during the extension of neurites. The number of \gls{vesicle}s moving \gls{anterograde}ly (0,029 $\pm$ 0,0011 vesicles/s/$\mu \text{m}^2$) was 37 \% higher compared to those being transported \gls{retrograde}ly (0,021 $\pm$ 0,0015 vesicles/s/$\mu \text{m}^2$) in growing neurites. There was no significant difference between antero- and \gls{retrograde} transport in retracting neurites probably because not all of the \gls{vesicle}s generated by \gls{endocytosis} during retraction are positive for \gls{Vamp2}. No significant differences were found in the velocity of moving \gls{vesicle}s.

\begin{figure}[t]
	\begin{center}
		\begin{tabular}{cccc}
			(a) & \includegraphics[width=0.3\textwidth]{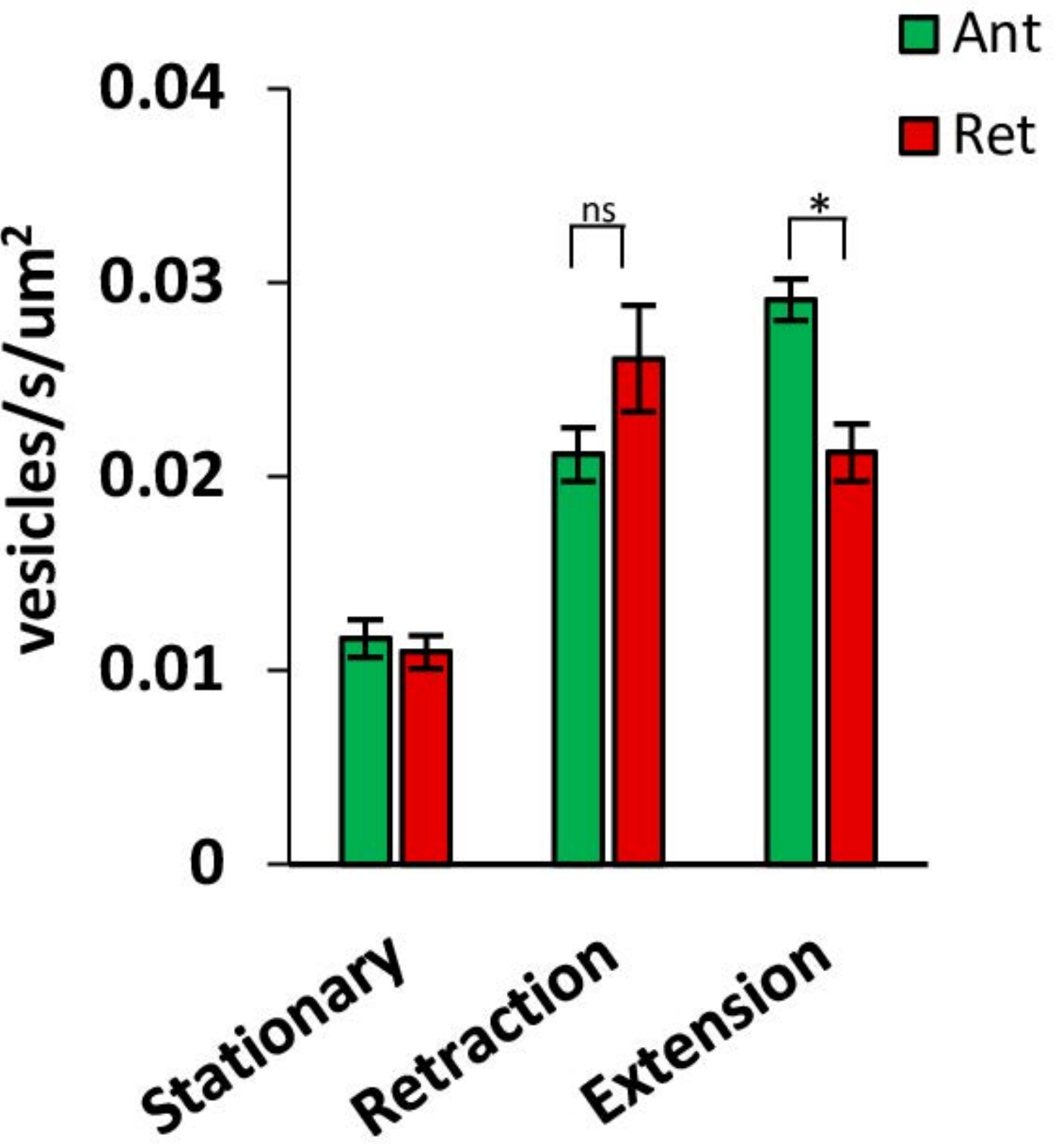} &
 			(b) & \includegraphics[width=0.3\textwidth]{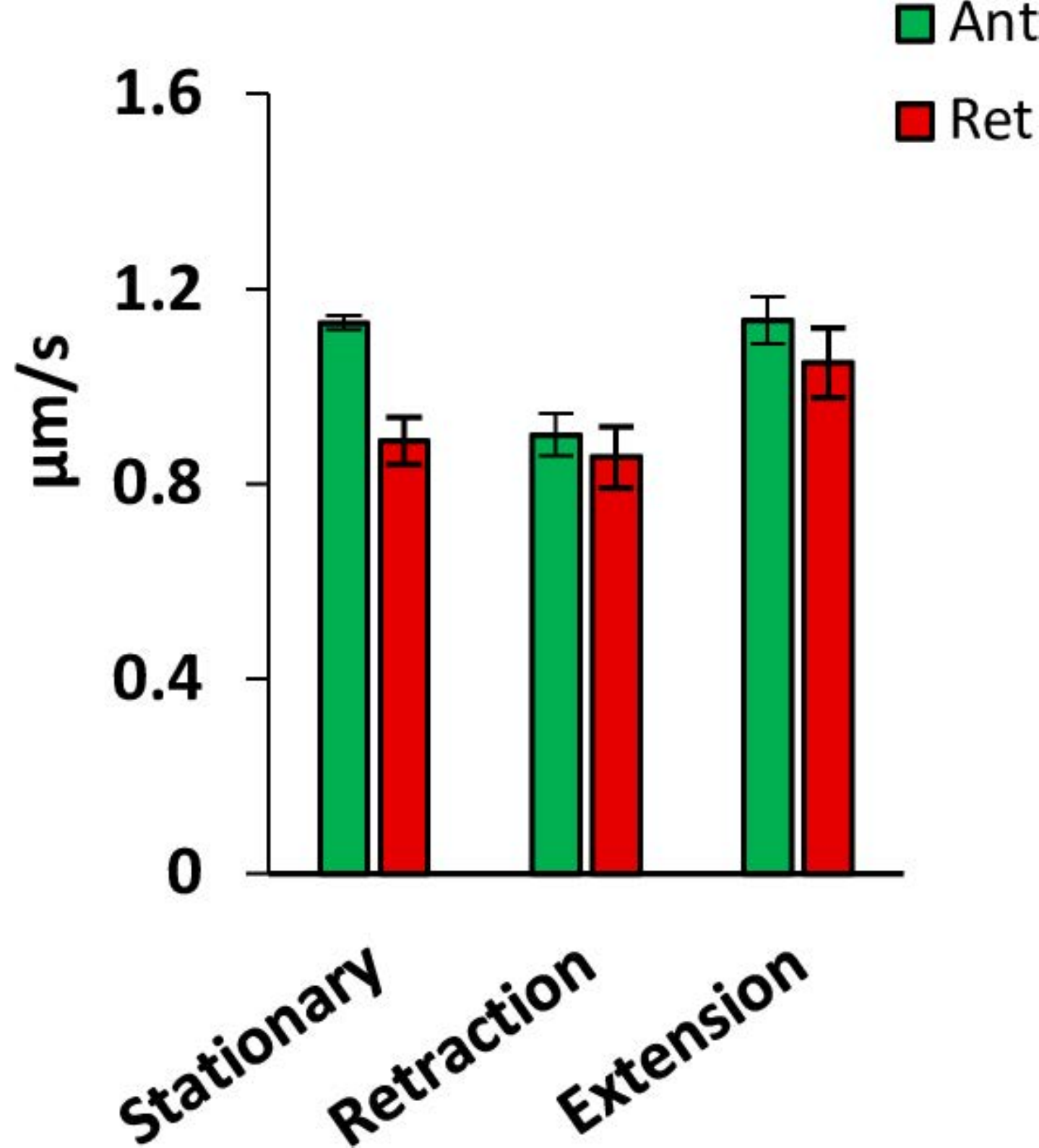} 
		\end{tabular}
	\end{center}
	\caption{\textbf{Quantification of \gls{vesicle} transport in unpolarized neurons:}
a) The number of \gls{Vamp2-GFP} positive \gls{vesicle}s moving in the \gls{anterograde} or \gls{retrograde} direction (vesicles/s/$\mu\text{m}^2$) was quantified in neurites that undergo extension or retraction and in stationary neurites that do not show a change in length. The number of moving \gls{vesicle}s is higher in neurites showing changes in length compared to stationary neurites. During extension, the number of \gls{vesicle}s moving \gls{anterograde}ly is higher (37 \%) compared to those being transported \gls{retrograde}ly (Wilcoxon Sign Rank test; $n=4$ independent experiments; values are means $\pm$ s.e.m, $^\star p< 0.05$). 
b) The speed ($\mu$m/s) of \gls{Vamp2-GFP} positive \gls{vesicle}s moving in the \gls{anterograde} and \gls{retrograde} direction was quantified. No significant differences were observed (Wilcoxon Sign Rank test; $n=4$ independent experiments; values are means $\pm$ s.e.m)} 
	\label{fig:BioFigureEF}
\end{figure}

\vspace{2ex} 


\subsection{Summary of the Model}\label{sec:vesicle_size}

Based on the previous experimental findings, we aim to formulate a mathematical model for the transport of vesicles based on the following assumptions: First, we consider 
neurites as one dimensional lattices connected, on one end, to the soma and to a pool representing the growth cone at the other end, see Figure \ref{fig:2BScetchModelNeuronLattice}.
Vesicles that are currently transported \gls{anterograde} and those that are moving \gls{retrograde} are modelled as two separate species moving on these lattices. At the \gls{growth cone} anterograde vesicles can fuse with the membrane while also vesicles can be separated from it. During this process, anterograde vesicles can be converted to retrograde ones and vice versa. The same can happen when vesicles enter or leave the soma.
The \gls{growth cone}s and the \gls{soma} will be modelled separately as pools that can store a given number of \gls{vesicle}s.
\begin{figure}
\centering
	\includegraphics[width=0.7\textwidth]{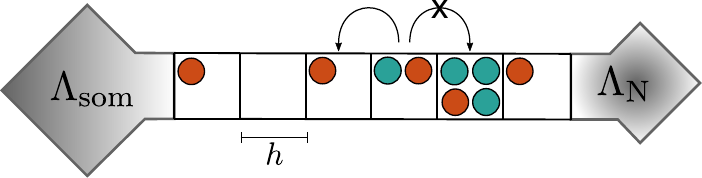}
	\caption{Sketch of the lattice based size exclusion model with the number of grid points $n=6$. For illustration purpose the maximal number of \gls{vesicle}s is 4 whereas it is higher in reality.}
	\label{fig:2BScetchModelNeuronLattice}
\end{figure}
\begin{figure}
	\centering
	\includegraphics[width=0.8\textwidth]{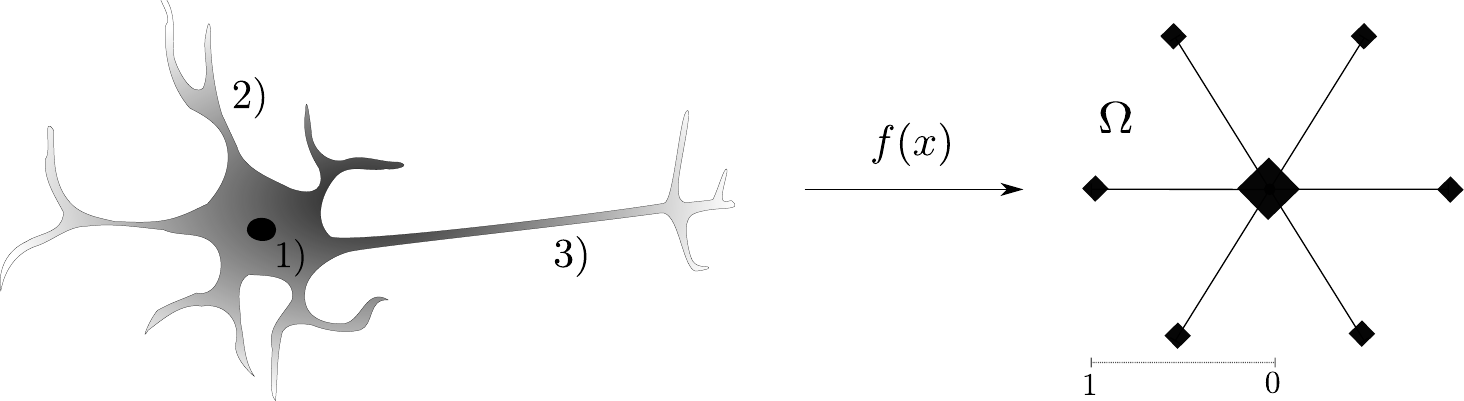}
	\caption{\textbf{Sketch of a neuron and identification with a starshaped-domain:} On the left, a sketch of a neuron can be seen, where 1) corresponds to the cell nucleus, 2) to a dendrite and 3) to the axon. On the right, a union of six unit intervals portraits the shape of the neurite that is assumed in the modelling. After the branching of the neurites is neglected, the neuron is mapped to a starshaped-domain via a function $f$.}
	\label{fig:SketchIdentificationStarObject}
\end{figure}
As \gls{vesicle}s have a positive volume, there is a maximal density within the neurites that depends on the size of the \gls{vesicle}s.
This results in a lattice model that will be described in full detail in Section \ref{sec:modelling}.

Finally, let us briefly comment of the physical dimensions involved. In practice \gls{vesicle}s with different diameters varying from $80$ to $150$ nm have been described (\cite{tsaneva-atanasova_quantifying_2009}, \cite{urbina_spatiotemporal_2018}, \cite{pfenninger_plasma_2009}) but for simplicity we assume that all \gls{vesicle}s have the same size ($130$ nm).
Thus, as the length of a neurite (which we consider as a one dimensional object) is approximately $1000$ nm, there is a natural maximal density of around $1000$ nm / $130$ nm $\approx$ 8 \gls{vesicle}s in the neurites. Another feature which depends on the diameter of the \gls{vesicle}s is the number of \gls{vesicle}s required for the extension of a neurite by a given length. 


\section{A discrete model}\label{sec:modelling}

We will now present a mathematical model for the growth process described in the previous section. In our approach, each neurite is modeled as a discrete lattice on which both antero- and \gls{retrograde} \gls{vesicle}s, modelled as seperate particles, move. As the diameter of a neurite is about $1000$ nm and thus very small compared to its length that can be up to $1$ m, we model neurites as one dimensional objects, i.e. a one dimensional lattice. On this lattice, the \gls{vesicle}s can jump to neighbouring cells with a probability that is determined by a given potential and a diffusion coefficient. Furthermore, we introduce a size exclusion effect by only allowing jumps to cells which are not fully occupied (see the discussion in Section \ref{sec:vesicle_size}).

These lattices are coupled to the \gls{soma} at one end and to a \gls{vesicle} pool corresponding to the \gls{growth cone} at the other end. See Figure \ref{fig:2BScetchModelNeuronLattice} for a summarized version of the model.
We will now describe the dynamics on the lattice as well as the coupling to the \gls{soma} and pools in detail, in the simple case of a single neurite connected to a \gls{soma}. \medskip\\
We first present the detailed dynamics of a single neurite:

1. \emph{Lattice dynamics}:
Each lattice consists of $i=1,\ldots,N$ cells of width $h$.
The midpoint of cell $i$ is denoted by $x_i$ and each cell can be occupied by a certain number of \gls{vesicle}s, depending on their size. Denoting by $a_i=a_i(t)$ and $r_i=r_i(t)$ the number of antero- and \gls{retrograde} \gls{vesicle}s at time $t$ in cell $i$, we have the following dynamics for the interior cells $i=2,\ldots, N-1$.
\begin{align}\label{eq:discrete_interior}
	\begin{split}
	2 C h^2 \partial_t a_i &= - a_i (1-\rho_{i-1})e^{-(V_{a,i}-V_{a,i-1})}
	+ a_{i-1}(1-\rho_i)e^{-(V_{a,i-1}-V_{a,i})} \\
	&\quad - a_i (1-\rho_{i+1})e^{-(V_{a,i}-V_{a,i+1})}
	+ a_{i+1}(1-\rho_i)e^{-(V_{a,i+1}-V_{a,i})},
	\\
	2 C h^2 \partial_t r_i &= - r_i (1-\rho_{i-1})e^{-(V_{r,i}-V_{r,i-1})}
	+ r_{i-1}(1-\rho_i)e^{-(V_{r,i-1}-V_{r,i})} \\
	&\quad - r_i (1-\rho_{i+1})e^{-(V_{r,i}-V_{r,i+1})}
	+ r_{i+1}(1-\rho_i)e^{-(V_{r,i+1}-V_{r,i})},
	\end{split}
\end{align}
where 
$$
\rho_i = \frac{a_i+r_i}{v_{\mathrm{max}}}
$$
denotes the (relative) sum of antero and \gls{retrograde} \gls{vesicle}s with $v_{\textrm{max}}$ denoting the maximal number of \gls{vesicle}s for a cell of width $h$. Furthermore, $V_{a,i} \coloneqq V_a(x_i)$ and $V_{r,i} \coloneqq V_r(x_i)$ are given potentials with $V_a, V_r \colon \R \rightarrow \R$ evaluated at the midpoint of cell $i$ and $C$ is one over the diffusion constant, see Section \ref{sec:scaling} for details. Roughly speaking, on the right hand sides of the above equations all terms with positive sign correspond to particles that jump into cell $i$ from the neighbouring cells while negative terms remove those that jump out.\medskip

2. \emph{Coupling to \gls{soma} and pools}:
We assume that all lattices are connected to the \gls{soma} at their first lattice site $i=1$. There, we have the following effects:
\begin{itemize}
 \item Retrograde \gls{vesicle}s leave the neurite and enter the \gls{soma} with a rate $\beta_r(\Lambda_{som}) r_1$, if there is enough space, where $\Lambda_{som}$ denotes the number of \gls{vesicle}s currently in the \gls{soma} and $\beta$ is a velocity that depends on this quantity.
 
 \item Anterograde \gls{vesicle}s can leave the \gls{soma} and enter the lattice, if there is enough space, i.e. if $\rho_1 < 1$. In this case, they enter with a given rate $\alpha_a(\Lambda_{som})(1-\rho_N)$, that also depends on the number of \gls{vesicle}s in the \gls{soma}.
\end{itemize}
At site $N$, the neurites are connected to their respective pools (\gls{growth cone}s) and we have that:
\begin{itemize}
 \item Anterograde \gls{vesicle}s leave the lattice and enter the pool with rate $\beta_a(\Lambda_{N}) r_N$, where again the velocity $\beta$ depends on the number of particles in the pool.
 \item Retrograde particles move from the pool into the lattice with rate $\alpha_r(\Lambda_{N})$, again only if space on the lattice is available. This yields the effective rate $\alpha_r(\Lambda_{N})(1-\rho_1)$.
\end{itemize}
Since we assume that both the pool and the \gls{soma} have a maximal capacity that cannot be exceeded, we make the following choices for in- and out-flux rates 
\begin{align*}
	\alpha_q(\Lambda_j) = \alpha_q \frac{\Lambda_j}{\Lambda_j^{\max}} \text{ and } \beta_q(\Lambda_j) = \beta_q ( 1 - \frac{\Lambda_j}{\Lambda_j^{\max}}),\quad k \in \{a,r\},\; q \in \{\mathrm{som},\mathrm{N}\}.
\end{align*}
This yields the following equations at the tips and the \gls{soma}:
\begin{align}\label{eq:discrete_boundary}
	\begin{split}
	2  C  h^2 \partial_t a_1 
	&=
	-  a_1 (1-\rho_{2})e^{-C(V_{a,1}-V_{a,2})}
	+  a_2(1-\rho_1)e^{-C(V_{a,2}-V_{a,1})} 
	\\
	&\qquad + C h\alpha_a (1-\rho_1), 
	\\
	2 C h^2 \partial_t a_N 
	&= -  a_N (1-\rho_{N-1})e^{-C(V_{a,N}-V_{a,N-1})}
	+  a_{N-1}(1-\rho_N)e^{-C(V_{a,N-1}-V_{a,N})} 
	\\
	&\qquad - C h\beta_a  a_N ,
	\\
	2  C  h^2 \partial_t r_1 
	&= -  r_1 (1-\rho_2)e^{-C(V_{r,x_1}-V_{r,2})}
	+  r_2(1-\rho_1)e^{-C(V_{r,2}-V_{r,1})} \\
	&\qquad - C h\beta_r r_1,
	\\
	2 C h^2 \partial_t r_N 
	&= -  r_N (1-\rho_{N-1})e^{-C(V_{r,N}-V_{r,N-1})}
	+  r_{N-1}(1-\rho_N)e^{-C(V_{r,N-1}-V_{r,N})} 
	\\
	&\qquad + C h\alpha_r (1-\rho_N). 
	\end{split} 	 
\end{align}
3. \emph{Dynamics in \gls{soma} and pools}: Finally, we have to describe the change of number of \gls{vesicle}s in the \gls{soma} and the respective neurite pools. For now, we assume that the change is only due to \gls{vesicle}s entering and existing, i.e. no growth or degradation terms are included. This yields the following ordinary differential equations
\begin{align}
	\label{eq:poolequations1}
	\partial_t \Lambda_{\text{N}}
	&= 
	- \alpha_{r} \big(1-\rho_{N}\big) \frac{\Lambda_{\text{N}}}{\Lambda_{\text{N}}^{\max}} 
	+ \beta_{a} a_{N} \big(1 - \frac{\Lambda_{\text{N}}}{\Lambda_{\text{N}}^{\max}} \big),
	\\ \label{eq:poolequations2}
	\partial_t \Lambda_{\text{som}}
	&= 
	\beta_{r} r_{1} \big(1 - \frac{\Lambda_{\text{som}}}{\Lambda_{\text{som}}^{\max}}\big) 
	- \alpha_a \big(1-\rho_{1}\big) \frac{\Lambda_{\text{som}}}{\Lambda_{\text{som}}^{\max}}.
\end{align}
4. \emph{Extension to multiple neurites}:
In the case of $M$ neurites, we denote by $a_{i,l}$ and $r_{i,l}$ the concentration of retro- and \gls{anterograde} \gls{vesicle}s in neurite $l$, $l=1,\ldots M$ at site $i$. The pools are then called $\Lambda_{N,l}$ and we also change the names of all parameters accordingly, i.e. we have $\alpha_{r,l}$, $\beta_{r,l}$, $\ldots$. While the equations for the dynamics inside the neurites \eqref{eq:discrete_interior}, at the tip \eqref{eq:discrete_boundary} and for the respective growth cones \eqref{eq:poolequations1} remain unchanged (despite the different notation for the constants), the equation for the soma becomes
\begin{align}
	\partial_t \Lambda_{\text{som}}
&= \sum_{l=1}^M \left[ \beta_{r,l} r_{1,l} \big(1 - \frac{\Lambda_{\text{som}}}{\Lambda_{\text{som}}^{\max}}\big) 
- \alpha_{a,l} \big(1-\rho_{1,l}\big) \frac{\Lambda_{\text{som}}}{\Lambda_{\text{som}}^{\max}}\right].
\end{align}


\begin{remark}[On the Modelling]
	\begin{itemize}
	\item[a)] Note that $\alpha$ has a different physical interpretation than $\beta$.
	Whereas $\alpha$ is given in $\frac{\text{ves}}{\text{sec}}$ and specifies an influx rate, $\beta$ is given in $\frac{\mu\text{m}}{\text{sec}}$ and therefore specifies an outflux velocity.
	This is essential for the boundary contributions in \eqref{eq:discrete_boundary} all having the same physical unit (using that $1-\rho_i$ is already scaled).
	\item[b)] We are not dealing with the domain and the actual concentrations in the pool explicitly but only model the total number of \gls{vesicle}s present. In particualar there is no diffusion or transport in the pools. Instead, we assume that the dynamics are fast compared to those of the neurites. In that way we allow for \gls{vesicle}s that have left one neurite and entered a pool to immediately leave into another neurite.
	\end{itemize}
\end{remark}
\begin{remark}[Coupling]
	Even though the equations in \eqref{eq:discrete_interior} describe the evolution of concentrations, the pools in \eqref{eq:poolequations1}--\eqref{eq:poolequations2} have the physical unit mass. Their coupling naturally arises using the flux as a linking element. Indeed fluxes have the physical unit $\frac{\text{ves}}{\text{sec}}$ as have the terms on the right hand side of \eqref{eq:discrete_boundary} that correspond to the boundary flow as well as the reaction terms for the time evolution of the pools in \eqref{eq:poolequations1}--\eqref{eq:poolequations2}.
\end{remark}
\begin{remark}[Numerical Simulations]\label{rem:numerical} One advantage of our model is that it immediately yields a discretisation for numerical simulations. Indeed, by construction it is already discrete in space and by applying an explicit Euler discretisation we arrive at a fully discrete scheme. This scheme will be used to perform simulations in Section \ref{sec:numerics}. There, we will also present some of the scheme's properties. 
\end{remark}

\subsection{Scaling}\label{sec:scaling}

Next we transform all relevant variables into an appropriate scaled and dimensionless form, where we always indicate the corresponding dimensionless form with a bar and the typical size with a tilde. Thus e.g. $\overline{r} = \frac{r}{\tilde{r}}$ is a dimensionless quantity. Note that we will then omit the bar everywhere after this section for reasons of readability.

Motivated by the discussion in Section \ref{sec:biological} we make the following choices: The \textit{typical length} is $\tilde{L} = 50 ~\mu \text{m}$, the \textit{typical time} is $\tilde{t} = 100 ~\text{sec}$, the \textit{typical diffusion constant} is $\tilde{\epsilon} = 10^{-1}~ \frac{\mu\text{m}^2}{\text{sec}}$, the \textit{typical potential} is $\tilde{V} = 1 ~\frac{\mu\text{m}^2}{\text{sec}}$. 
The \textit{typical influx} is $\tilde{\alpha} = 1~ \frac{\text{vesicles}}{\text{sec}}$ and the \textit{typical outflow velocity} is $\tilde{\beta} = 10^{-1}~ \frac{\mu\text{m}}{\text{sec}}$, thus the different boundary conditions have the same unit of measurement.
As the \textit{typical diameter} of one \gls{vesicle} is 130 nm and the neurite diameter is 1 $\mu$m, the maximal density is given by $\rho_{\text{max}} =  \frac{8 ~\text{vesicles}}{0,13 ~\mu \text{m}} \approx 60 \frac{\text{vesicles}}{\mu \text{m}}$.
The \textit{typical density} of \gls{anterograde} and \gls{retrograde} particles is $\tilde{a}, \tilde{r} = 15~ \frac{\text{vesicles}}{\mu \text{m}}$, which corresponds to a half filled neurite.

As $1- \rho$ is already scaled, the equations \eqref{eq:discrete_interior} transform to, for $i=2,\ldots, N-1$,
\begin{align}\label{eq:discrete_interior_scaled}
	\begin{split}
	\frac{1}{\lambda_\epsilon} \bar C h^2 \partial_t a_i 
	&= - a_i (1-\rho_{i-1})e^{-\overline{C}\tilde{C}\tilde{V}(\overline{V}_{a,i}-\overline{V}_{a,i-1})}	
	+ a_{i-1}(1-\rho_i)e^{-\overline{C}\tilde{C}\tilde{V}(\overline{V}_{a,i-1}-\overline{V}_{a,i})} 
	\\
	&\quad- a_i (1-\rho_{i+1})e^{-\overline{C}\tilde{C}\tilde{V}(\overline{V}_{a,i}-\overline{V}_{a,i+1})}
	+ a_{i+1}(1-\rho_i)e^{-\overline{C}\tilde{C}\tilde{V}(\overline{V}_{a,i+1}-\overline{V}_{a,i})},
	\\
	\frac{1}{\lambda_\epsilon} \bar  C h^2 \partial_t r_i 
	&= - r_i (1-\rho_{i-1})e^{-\overline{C}\tilde{C}\tilde{V}(\overline{V}_{r,i}-\overline{V}_{r,i-1})}
	+ r_{i-1}(1-\rho_i)e^{-\overline{C}\tilde{C}\tilde{V}(\overline{V}_{r,i-1}-\overline{V}_{r,i})} 
	\\
	&\quad- r_i (1-\rho_{i+1})e^{-\overline{C}\tilde{C}\tilde{V}(\overline{V}_{r,i}-\overline{V}_{r,i+1})}
	+ r_{i+1}(1-\rho_i)e^{-\overline{C}\tilde{C}\tilde{V}(\overline{V}_{r,i+1}-\overline{V}_{r,i})},
	\end{split}
\end{align}
with $\tilde{C}= \frac{1}{\tilde{\epsilon}}$ and $\bar C = \frac{1}{2 \overline{\epsilon}}$.
Thus the product of all typical variables appearing in the summands of the previous two equations are\begin{align}
	\lambda_\epsilon = \frac{\tilde{t}}{2\tilde{L}^2} \tilde{\epsilon} = \frac{100~ \text{sec}}{2500 \mu\text{m}^2} 10 ^{-1} \frac{\mu\text{m}^2}{\text{sec}} 
	= 4\cdot 10^{-3} \text{ and }
	\lambda_V = \frac{\tilde{t}}{2\tilde{L}^2} \tilde{V} = 0.04
	\label{2AFactorsThatOccurByScaling}
\end{align}
after cancellation of $\overline{a}$ and $\overline{r}$ respectively on both sides.
Note that the scaling parameters for the boundary conditions can be calculated by multiplying the boundary conditions with $\frac{\tilde{L}}{\tilde{t}}$ and additionally scaling terms corresponding to in- and outflux with $\tilde{\gamma}$. 
We obtain
 \begin{align}\label{eq:discrete_boundary_scaled}
	\begin{split}
	\bar C  h^2 \partial_t a_1 
	&=
	- \lambda_\epsilon a_0 (1-\rho_{2})e^{-\bar C\tilde{C}\tilde{V}(\overline{V}_{a,1}-\overline{V}_{a,2})}
	+ \lambda_\epsilon a_2(1-\rho_1)e^{-\bar C\tilde{C}\tilde{V}(\overline{V}_{a,2}-\overline{V}_{a,1})} 
	\\
	&\qquad + \lambda_{in} \bar C h\alpha_a (1-\rho_1), 
	\\
	\bar C h^2 \partial_t a_N 
	&= - \lambda_\epsilon a_N (1-\rho_{N-1})e^{-\bar C\tilde{C}\tilde{V}(\overline{V}_{a,N}-\overline{V}_{a,N-1})}
	+ \lambda_\epsilon a_{N-1}(1-\rho_N)e^{-\bar C\tilde{C}\tilde{V}(\overline{V}_{a,N-1}-\overline{V}_{a,N})} 
	\\
	&\qquad
	- \lambda_{out} \bar C h\beta_a  a_N,
	\\
	\bar C  h^2 \partial_t r_1 
	&= - \lambda_\epsilon r_1 (1-\rho_2)e^{-\bar C\tilde{C}\tilde{V}(\overline{V}_{r,1}-\overline{V}_{r,2})}
	+ \lambda_\epsilon r_2(1-\rho_1)e^{-\bar C\tilde{C}\tilde{V}(\overline{V}_{r,2}-\overline{V}_{r,1})} 
	\\
	&\qquad - \lambda_{out} \bar C h\beta_r r_1,
	\\
	\bar C h^2 \partial_t r_N 
	&= - \lambda_\epsilon r_N (1-\rho_{N-1})e^{-\bar C\tilde{C}\tilde{V}(\overline{V}_{r,N}-\overline{V}_{r,N-1})}
	+ \lambda_\epsilon r_{N-1}(1-\rho_N)e^{-\bar C\tilde{C}\tilde{V}(\overline{V}_{r,N-1}-\overline{V}_{r,N})} 
	\\
	&\qquad
	+ \lambda_{in} \bar C h\alpha_r (1-\rho_N),
	\end{split} 	 
\end{align}
where we introduced the dimensionless scaling parameters
\begin{align}
	\lambda_{in}
	= \frac{\tilde{t}\tilde{\alpha}}{2\tilde{L}\tilde{a}} 
	= \frac{100~ \text{sec} \cdot 1 \frac{\text{ves}}{\text{sec}}}{ 50~ \mu \text{m} \cdot 15 \frac{\text{ves}}{\mu \text{m}}}
	= 0.1333,
	\qquad
	\lambda_{out}
	= \frac{\tilde{t}\tilde{\beta}}{2\tilde L}
	= \frac{100~ \text{sec} \cdot 10^{-1} \frac{\mu\text{m}}{\text{sec}}}{50 ~ \mu \text{m}}
	= 0.2.
	\label{def:2BScalingParameterBoundary}
\end{align}
Furthermore equation \eqref{2B:ODEPoolConcentrations} that describes the pool concentration requires scaling. Applying the same time scale as above and the same scaling of \gls{vesicle} concentrations yields
\begin{align*}
	\frac{1}{\tilde{t}} \partial_{\overline{t}} ( \tilde\Lambda_{\text{N}}\bar\Lambda_{\text{N}})
	&=
	 \Big[
	- \tilde \alpha \bar \alpha_{r} \big(1-\rho_{N}\big) \frac{\Lambda_{\text{N}}}{\Lambda_{\text{N}}^{\max}}
	+\tilde \beta_{a}\bar \beta_{a} \tilde a_{N}\bar a_{N} \big(1 - \frac{\Lambda_{\text{N}}}{\Lambda_{\text{N}}^{\max}} \big) \Big].
\end{align*}
Multiplying by $\tilde t$ and dividing by $\tilde \Lambda_{N}$ gives
\begin{align*}
     \partial_{\overline{t}}  \bar\Lambda_{\text{N}}
	&=
	 \Big[
	- \frac{\tilde \alpha \tilde{t}}{\tilde\Lambda_{\text{N}}}\bar \alpha_{rN} \big(1-\rho_{N}\big) \frac{\Lambda_{\text{N}}}{\Lambda_{\text{N}}^{\max}}
	+\frac{\tilde t\tilde {\beta}_{aN}\tilde a_{N}}{\tilde\Lambda_{\text{N}}}\bar \beta_{aN} \bar a_{N} \big(1 - \frac{\Lambda_{\text{N}}}{\Lambda_{\text{N}}^{\max}} \big) \Big].
\end{align*}
Choosing $\tilde \Lambda_{N} = 2\tilde L\tilde a_{N} = 50~ \mu \text{m} \cdot 15 ~\frac{\text{vesicles}}{\mu\text{m}} = 750 ~\text{vesicles}$, we finally arrive at
\begin{align*}
     \partial_{\overline{t}}  \bar\Lambda_{\text{N}}
	&=
	 \Big[
	- \lambda_{in}  \bar\alpha_{r}  \big(1-\rho_{N}\big) \frac{\Lambda_{\text{N}}}{\Lambda_{\text{N}}^{\max}}
	+\lambda_{out}\bar \beta_{aN} \bar a_{N} \big(1 - \frac{\Lambda_{\text{N}}}{\Lambda_{\text{N}}^{\max}} \big) \Big]
\end{align*}
and
\begin{align*}
	\partial_{\overline{t}}  \bar\Lambda_{\text{som}}
	&= 
	\lambda_{out} \bar\beta_{r} \bar r_{1} \big(1 - \frac{\Lambda_{\text{som}}}{\Lambda_{\text{som}}^{\max}}\big)  
	- \lambda_{in} \bar\alpha_{a} \big(1-\rho_{1}\big) \frac{\Lambda_{\text{som}}}{\Lambda_{\text{som}}^{\max}}.
\end{align*}
Again, the generalization to more than one neurite is straight forward.
\begin{remark}[Choice of Typical Parameters]
	The identification $\tilde \Lambda_{N} = 2\tilde L\tilde a_{N}$ in the paragraph above is natural as there is a proportion between the size of the pools and the size of the neurites in reality, where the prefactor 2 corresponds to the fact that we deal with two types of species.
	This proportion should be reflected in the typical value $\tilde{\Lambda}$.
\end{remark}

\section{Macroscopic Cross Diffusion in a Model Neuron with Pools}\label{sec:ModelNeuronPools}
It is well known that lattice models as the one described in the preceeding section have a (formal) correspondence to (systems of) partial differential equations \cite{simpson.hughes.ea:diffusion,BurgerFrancescoPietSchlakeNonlinearCrossdiffusion}. Let us briefly summarize the procedure for a single neurite: First we chose $h = 1/N$ so that the lattice has exactly length one and fix the continuous domain $\Omega = [0,1]$. Now for each lattice cell denote by $x_i \in \Omega$ its midpoint and assume the existence of smooth functions $r=r(x,t)$ and $a=a(x,t)$ such that $r_i(t) = r(x_i,t)$ and $a_i(t)=a(x_i,t)$. With this notation, we can formally apply Taylor's formula to the right hand sides of equations \eqref{eq:discrete_interior_scaled}, up to second order. For example, for the first equation in \eqref{eq:discrete_interior_scaled}, this yields
\begin{align*}
	C h^2 \partial_t a(x_i,t) &= 
	\lambda_\epsilon \Big(
	\frac{1}{2}h^2\left( a(x_i,t) \partial_{xx} \rho(x_i,t) + \partial_{xx}a(x_i,t) (1-\rho(x_i,t)\right) \Big)
	\\
&+ \lambda_V \Big( Ch^2 \Big[ a(x_i,t)\partial_x \rho(x_i,t) \partial_x V_a (x_i,t) - \partial_x a(x_i,t)(1-\rho(x_i,t)) \partial_x V_a(x_i,t).
\\
&- a(x_i,t)(1-\rho(x_i,t)) \partial_{xx} V_a(x_i,t) \Big] + O(h^3) \Big),
\end{align*}
where $O(h^3)$ refers to remaining terms of order $h^3$. Then we divide both sides by $h^2$ and take the limit $h\to 0$ which yields
\begin{align}\label{eq:macro}
\begin{split}
 \partial_t a + \partial_x J_a  &= 0 \text{ with } J_a := -\left(\lambda_\epsilon[(1-\rho))\partial_x a + a\partial_x \rho] + \lambda_V a(1-\rho))\partial_x V_a\right), \\
  \partial_t r + \partial_x J_r  &= 0 \text{ with } J_r := -\left(\lambda_\epsilon[(1-\rho))\partial_x r + r\partial_x \rho] + \lambda_V r(1-\rho))\partial_x V_r\right),
 \end{split}
\end{align}
on $\Omega \times (0,T)$, having applied the same procedure to retrograde vesicles. Equations \eqref{eq:discrete_boundary} results in the boundary conditions
\begin{equation}\label{eq:macro_cross_diff_bc} 
\begin{aligned}
- J_{a} \cdot {n_1}  &= \lambda_{in} \alpha_{a} \frac{\Lambda_{\text{som}}}{\Lambda_{\text{som}}^{\max}} (1-\rho) \quad &&\text{ at }x = 0,
\\
J_{a}\cdot {n_2} &= \lambda_{out}\beta_{a} \big(1 - \frac{\Lambda_{\text{N}}}{\Lambda_{\text{N}}^{\max}} \big) a \quad  &&\text{ at } x = 1,
\\
J_{r} \cdot {n_1}  &= \lambda_{out}\beta_{r}\big(1 - \frac{\Lambda_{\text{som}}}{\Lambda_{\text{som}}^{\max}}\big)  r \quad  &&\text{ at }x=0,
\\
-J_{r}\cdot {n_2} &= \lambda_{in} \alpha_{r} \frac{\Lambda_{\text{N}}}{\Lambda_{\text{N}}^{\max}} (1-\rho) \quad &&\text{ at }x = 1,
\end{aligned} 
\end{equation}
where $n_1$ and $n_2$ denote the outward pointing unit vectors at $x=0$ and $x=1$, respectively. In the case of a single neurite with fixed in- and outflow boundary conditions (i.e. without considering the pools explicitly), this has been carried out in detail in \cite{SchlakePhD2011}.
Passing to the limit in the ODEs for the pools yields 
\begin{align}
	\begin{split}
	\label{2B:ODEPoolConcentrations}
	\partial_t \Lambda_{\text{N}}
	&= 
	- \alpha_{r} \big(1-\rho(1)\big) \frac{\Lambda_{\text{N}}}{\Lambda_{\text{N}}^{\max}} 
	+ \beta_{a} a(1) \big(1 - \frac{\Lambda_{\text{N}}}{\Lambda_{\text{N}}^{\max}} )\big),
	\\
	\partial_t \Lambda_{\text{som}}
	&= 
	\beta_{r} r(0) \big(1 - \frac{\Lambda_{\text{som}}}{\Lambda_{\text{som}}^{\max}}\big) 
	- \alpha_{a} \big(1-\rho(0)\big) \frac{\Lambda_{\text{som}}}{\Lambda_{\text{som}}^{\max}}, 
	\end{split}
\end{align}
i.e. the only difference in contrast to \eqref{eq:poolequations1}--\eqref{eq:poolequations2} is the fact that the concentrations $a, r$ and $\rho$ are now functions on a continuous domain $\Omega$ instead of a discrete grid. Therefore we wrote $r(0)$ instead of $r_1$, etc. 

%
\begin{figure}
\centering
	\includegraphics[width=1\textwidth]{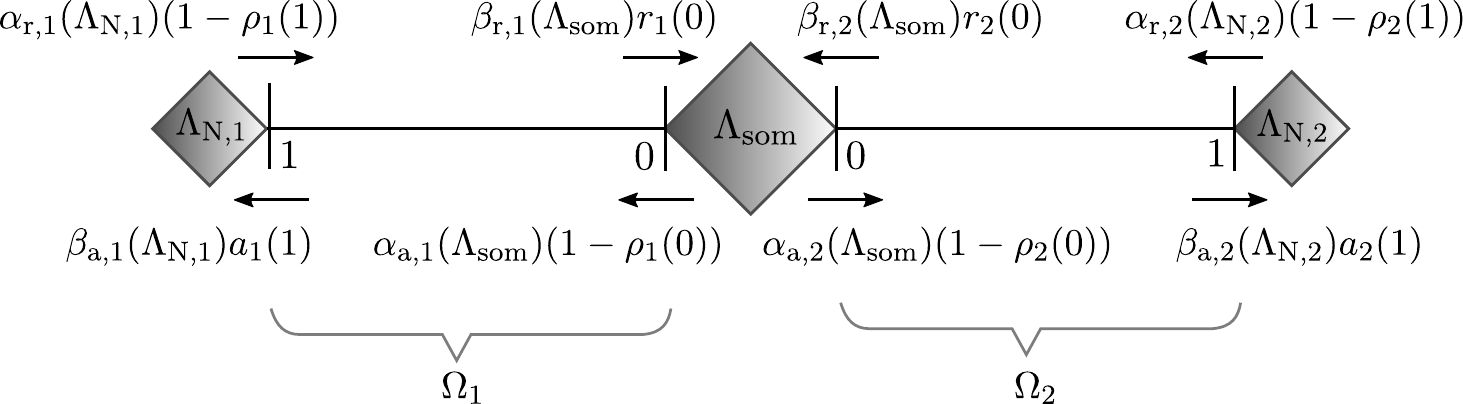}
	\caption{\textbf{Sketch of the model neuron:} The model neuron consists of two neurites and indicated boundary flow in the domain $\Omega = \Omega_1 \cup \Omega_2$, where the two unit intervals $\Omega_1$ and $\Omega_2$ correspond to two neurites.
	The squares correspond to pools where \gls{vesicle}s can be stored, i.e. the pool in the middle corresponds to the \gls{soma} and the pools at the tips of the neurites correspond to the corresponding \gls{growth cone}s. For an easy visualization $\Omega_1$ is illustrated as a mirrored copy of $\Omega_2$.}
	\label{fig:2BScetchModelNeuron}
\end{figure}

In the situation of two neurites, we will have equations \eqref{eq:macro} for each neurite with appropriate boundary condition and again the ODE for $\Lambda_{som}$ will contain as a right hand the sum of all in- and outfluxes. This situation is summarized in Figure \ref{fig:2BScetchModelNeuron}. In particular, we see that formally the total mass of the system is preserved, as expected.


\begin{remark}[Analysis of the Model] \label{rem:Analysis}
 	The focus of this paper is to gain an understanding of the distribution of \gls{vesicle}s during the growth of neurites based on the discrete model introduced in Section \ref{sec:modelling} and its numerical simulation. However, from a mathematical point of view it is also very interesting to study the macroscopic counterpart of the model given by the system of equations \eqref{eq:macro_cross_diff_bc}--\eqref{2B:ODEPoolConcentrations}. We therefore briefly point out the relevant questions and difficulties in the mathematical analysis of this model. 
	
	Clearly, most important is the question of existence and uniqueness of solutions. From an application point of view, also the long time behaviour is relevant. As for existence, a number of results on cross-diffusion equations of type \eqref{eq:macro_cross_diff_bc} is available, \cite{jungel:boundedness-by-entropy*5,burger.francesco.ea:nonlinear,ehrlacher.bakhta:cross-diffusion}, and also the flux boundary conditions \eqref{eq:macro_cross_diff_bc} have been analysed before, \cite{burger.pietschmann:flow*1}. The main issue when applying these results to our model is the following. The present theory shows existence of solutions in the spaces
	\begin{align*}
	r_i,\, a_i &\in L^2((0,T);L^2(\Omega))\cap H^1((0,T);(H^1)^*(\Omega)),\\
	\rho_i &\in L^2((0,T);H^1(\Omega))\cap H^1((0,T);(H^1)^*(\Omega)).
	\end{align*}
	Thus, making use of the embedding of $H^1$ into the space of continuous functions, it makes sense to evaluate $\rho$ at a point of the boundary (e.g. $\rho(0)$). Unfortunately, this regularity is not available for the concentrations $r$ and $a$ so that we cannot evaluate them at the boundary as would be necessary for the boundary conditions \eqref{eq:macro_cross_diff_bc} to be well-defined. Thus one would need an improved regularity theory (which seems out of reach at present) or one needs to modify the model in a way which is consistent with the biological modelling on the discrete level (e.g. by allowing particles to switch places). As for the long time behaviour, the numerical simulations of Section \ref{sec:numerics} suggest that metastable states exist. Their analysis is another interesting problem and we postpone both issues to future work.
\end{remark}

\section{Numerical Simulations}\label{sec:numerics}
In order to derive a fully discrete numerical scheme, see also Remark \ref{rem:numerical}, we use an explicit Euler discretisation for the time derivatives in \eqref{eq:discrete_interior_scaled} and \eqref{eq:discrete_boundary_scaled}. 
Subdividing the interval $[0,T]$ into $K$ intervals we denote by $\tau = T/K$ the step size and by $a_i^k,\,r_i^k$ the respective concentrations at time $t_k = k\tau$. Within the neurites this results in the scheme
\begin{align}\label{eq:num_scheme_bulk}
\begin{split}
	a_i^{k+1} &= a_i^k + \tau H G_a^k,\\
	r_i^{k+1} &= r_i^k + \tau H G_r^k,
	\end{split}
	\qquad i = 2,...N-1,
\end{align}
where 
$H = 2\lambda_\epsilon\frac{\epsilon}{ h^2}$ and 
\begin{align*} 
\begin{split}
	G_q^k = - q_i^k (1-\rho_{i-1}^k)e^{-C\tilde{C}\tilde{V}(V_{q,i}-V_{q,i-1})}
	+ q_{i-1}^k(1-\rho_i^k)e^{-C\tilde{C}\tilde{V}(V_{q,i-1}-V_{q,i})} \\
	- q_i^k(1-\rho_{i+1}^k)e^{-C\tilde{C}\tilde{V}(V_{q,i}-V_{q,i+1})}
	+ q_{i+1}^k(1-\rho_i^k)e^{-C\tilde{C}\tilde{V}(V_{q,i+1}-V_{q,i})}
	\end{split} \quad q \in \{a,r\}.
\end{align*}
The evolution at the boundary follows by discretising the time derivates in \eqref{eq:discrete_boundary_scaled}, e.g. for anterograde vesicles at the soma we obtain
\begin{align}\label{eq:num_scheme_bdry}
	a_1^{k+1} &= a_1^k + \tau \Big( H G^k_{a,1} +  \frac{\lambda_{in}}{h} \alpha_a (1-\rho_1^k) \Big), 
\end{align}
with  
\begin{align*} 
	G^k_{a,1}
	= - a_1^k (1-\rho_{2}^k)e^{-C\tilde{C}\tilde{V}(V_{a,1}-V_{a,2})}
	+ a_2^k(1-\rho_1^k)e^{-C\tilde{C}\tilde{V}(V_{a,2}-V_{a,1})}.
\end{align*}
For the time discretisation of the ODEs \eqref{2B:ODEPoolConcentrations} for the pools and the \gls{soma} we also use an explicit Euler discretisation with the same time step size. As $\Lambda_{N}$, $\Lambda_{som}$ model a mass and to ensure that the total mass remains conserved we multiply the in- and outflux terms by $h$ and finally obtain the evolution of the pool concentrations via
\begin{align*}
	\Lambda_q^{k+1} = \Lambda_q^k + \tau \Big[ \lambda_{in} ~\text{Influx Terms} + \lambda_{out}~ \text{Outflux Terms} \Big],\quad q \in \{N,som\}.
\end{align*}
To further analyse the properties of this scheme, let us define the constants
\begin{align*}
V_{q,max}^- := \max_{2\le i \le n-1} e^{-C\tilde{C}\tilde{V}(V_{q,i}-V_{q,i-1})},
\quad V_{q,max}^+ := \max_{2\le i \le n-1} e^{-C\tilde{C}\tilde{V}(V_{q,i}-V_{q,i+1})}
\end{align*}
as well as 
\begin{align}
V_{max} := \max(V_{a,max}, V_{r,max})\quad \text{ where } \quad V_{k,max} := \max (V_{q,max}^+,V_{q,max}^-).
\end{align}

\begin{lemma}[Preservation of box constraints]
	Assume that the initial concentrations $a_i^0, r_i^0$ for $i=1,\ldots,N$ are non-negative and satisfy the density constraint $a_i^0 + r_i^0 \leq 1$. Then if 
	the (CFL-like) condition 
	$$
	(1- 2\tau HV_{max} - \tau\max(2 HV_{max}, Ch\lambda_{out}\max(\beta_a,\beta_r),Ch\lambda_{in}\max(\alpha_a,\alpha_r))) \ge 0
	$$
	holds we also have 
	$$
	0 \le a_i^k,\, r_i^k,\;\;\; a_i^k+r_i^k \le 1\quad \text{ for }\quad k=1,\ldots, M,\; i =1, \ldots, N,
	$$
	with $a_i^k$, $r_i^k$ computed from $a_i^{k-1}$, $r_i^{k-1}$ via \eqref{eq:num_scheme_bulk}--\eqref{eq:num_scheme_bdry}.
	
\end{lemma}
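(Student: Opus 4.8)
The plan is to argue by induction on the time level $k$, the base case $k=0$ being exactly the hypothesis on the initial data. Assuming $0 \le a_i^k, r_i^k$ and $\rho_i^k = a_i^k + r_i^k \le 1$ for all $i$, I would recover the same bounds at level $k+1$ by separately proving (i) non-negativity of each species and (ii) the upper bound $\rho_i^{k+1} \le 1$. The common device is to rewrite every update rule so that the coefficient multiplying the diagonal value (the value at the same cell $i$ at time $k$) is isolated, putting the scheme into a quasi-monotone form: the diagonal coefficient collects the out-flux rates, while every off-diagonal contribution is a product of manifestly non-negative factors, namely a density value, a positive exponential weight, and a slack factor of the form $1-\rho \ge 0$.

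For non-negativity I would write the interior update \eqref{eq:num_scheme_bulk} as $a_i^{k+1} = a_i^k\bigl[1 - \tau H((1-\rho_{i-1}^k)e^{\cdots} + (1-\rho_{i+1}^k)e^{\cdots})\bigr] + \tau H (1-\rho_i^k)(a_{i-1}^k e^{\cdots} + a_{i+1}^k e^{\cdots})$, and likewise for $r$. Using $0 \le 1-\rho \le 1$ and the definition of $V_{max}$ to bound each exponential, the bracketed diagonal coefficient is $\ge 1 - 2\tau H V_{max}$, which is non-negative under the CFL-like condition, while the remaining term is a sum of non-negative quantities by the induction hypothesis; hence $a_i^{k+1} \ge 0$. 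The boundary cells in \eqref{eq:num_scheme_bdry} carry only one diffusive neighbour plus an extra flux: the influx (the $\alpha$ contributions) is non-negative and only helps, whereas for the cells with an out-flux (the $\beta$ terms attached to $r_1$ and $a_N$) the diagonal coefficient loses an additional $\tau Ch\lambda_{out}\beta$. This is precisely why the CFL condition subtracts the worst-case term $\tau\max(2HV_{max}, Ch\lambda_{out}\max(\beta_a,\beta_r), Ch\lambda_{in}\max(\alpha_a,\alpha_r))$, giving a single uniform lower bound that dominates the out-rate of every cell.

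For the upper bound I would add the $a_i$ and $r_i$ updates to get an update for $\rho_i$ and then inspect $1-\rho_i^{k+1}$. Grouping the in-flux terms (those carrying the factor $1-\rho_i^k$) against the out-flux terms yields the identity $1-\rho_i^{k+1} = (1-\rho_i^k)\bigl(1 - \tau\,\mathrm{IN}_i\bigr) + \tau\,\mathrm{OUT}_i$, where $\mathrm{OUT}_i \ge 0$ by the induction hypothesis and $\mathrm{IN}_i$ is the total in-rate into cell $i$. The crux is the estimate $\mathrm{IN}_i \le 2V_{max}$ (with the analogous boundary bound involving $\lambda_{in}\alpha$), so that $1 - \tau\,\mathrm{IN}_i \ge 0$ under the CFL condition and therefore $1-\rho_i^{k+1} \ge 0$. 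This is the main obstacle, because the two species are advected by \emph{different} potentials $V_a \ne V_r$, so the exponential weights appearing in $\mathrm{IN}_i$ differ between $a$ and $r$ and the in-flux does not factor cleanly through $\rho$. The resolution is to dominate each individual weight by the common constant $V_{max}$ \emph{first} and only then recombine the neighbour contributions via $a_{i\pm1}^k + r_{i\pm1}^k = \rho_{i\pm1}^k \le 1$; the boundary influxes $\alpha_a(1-\rho_1^k)$ and $\alpha_r(1-\rho_N^k)$ already carry the needed slack factor and are absorbed into $\mathrm{IN}_i$, while the $\beta$ out-fluxes sit safely in $\mathrm{OUT}_i \ge 0$. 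Treating the interior together with the four boundary cases under the one uniform CFL condition then closes the induction.
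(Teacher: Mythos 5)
Your proof is correct and takes essentially the same route as the paper: induction in $k$, rewriting each update so that the coefficient of the diagonal value is bounded below by the CFL quantity while every off-diagonal contribution is a product of non-negative factors, with the $\alpha$-influx harmless for non-negativity, the $\beta$-outflux harmless for the upper bound, and the worst boundary rate absorbed into the $\max$ exactly as you describe. The only noteworthy difference is quantitative: your recombination $a_{i\pm1}^k+r_{i\pm1}^k=\rho_{i\pm1}^k\le 1$ gives the sharper in-flux estimate $\mathrm{IN}_i\le 2HV_{max}$, whereas the paper bounds each of the four in-flux terms by $V_{max}$ separately and therefore effectively requires $1-4\tau HV_{max}\ge 0$ (this is the role of the $2HV_{max}$ entry inside the $\max$); both estimates are covered by the stated CFL-like condition.
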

\begin{proof}
We argue by induction and assume that at time $t_k$ the constraints are satisfied. Indeed, according to \eqref{eq:num_scheme_bulk}, we have that for $i=2,\ldots,N-1$
\begin{align*}
a_i^{k+1} &= \left(1 -  \tau H\left[(1-\rho_{i-1}^k)e^{-C\tilde{C}\tilde{V}(V_{a,i}-V_{a,i-1})}
-(1-\rho_{i+1}^k)e^{-C\tilde{C}\tilde{V}(V_{a,i}-V_{a,i+1})}\right]\right) a_i^{N}\\
&+ a_{i-1}^k(1-\rho_i^k)e^{-C\tilde{C}\tilde{V}(V_{a,i-1}-V_{a,i})} 
+ a_{i+1}^k(1-\rho_i^k)e^{-C\tilde{C}\tilde{V}(V_{a,i+1}-V_{a,i})}\\
&\ge \left(1- 2\tau HV_{max}\right) a_i^k,
\end{align*}
where we used that by assumption $(1-\rho_{i+1}^k)$ and $(1-\rho_{i-1}^k)$ are bounded by one and that the last two terms are non-negative. Thus the condition 
\begin{align}\label{eq:cond_stab_1}
(1-2\tau HV_{max}) \ge 0
\end{align} implies $a_i^{k+1} \ge 0$ and an analogous calculation yields the same condition to ensure non-negativity of $r_i^{k+1}$. To show that $(1-\rho_i^{k+1}) \ge 0$ we note that
\begin{align}\label{eq:cond_stab_2}
(1-\rho_i^{k+1}) \ge \left(1-4\tau HV_{max}\right) (1-\rho_i^{k+1}) 
\end{align}
holds. It remains to consider the boundary contributions. In order to preserve positivity when outflow conditions are present (i.e. for $a_n$ and $r_1$) we obtain the condition
\begin{align}\label{eq:cond_stab_3}
(1-2\tau HV_{max} - \tau Ch\lambda_{out}\max(\beta_a,\beta_r)) \ge 0,
\end{align}
while in order to preserve $\rho \le 1$ at inflow parts we have
\begin{align}\label{eq:cond_stab_4}
(1-2\tau HV_{max} - \tau Ch\lambda_{in}\max(\alpha_a,\alpha_r)) \ge 0.
\end{align}
To have \eqref{eq:cond_stab_1}--\eqref{eq:cond_stab_4} satisfied simultaneously finally yields the assumption
\begin{align*}
(1- 2\tau HV_{max} - \tau \max(2HV_{max}, Ch\lambda_{out}\max(\beta_a,\beta_r),Ch\lambda_{in}\max(\alpha_a,\alpha_r))) \ge 0.
\end{align*}
\end{proof}
\begin{remark}[Natural choice of numerical scheme]
	The classical upwind scheme is not applicable in this context as it only considers the particle movement initiated by the drift term. 
	In this context the drift term of species A can push against the drift of species R.
	This aspect is not covered by the classical upwind scheme.
\end{remark}

For the case to two neurites the algorithm was implemented in MATLAB using $400$ grid points in each domain and a time step size of $\tau = 10^{-5}$. See also the pseudo-code  in subsection \ref{Pseudocodes and Computing Time} in the appendix.

\subsection{Choice of Parameters and their Interpretation}
If not stated otherwise, we use the symmetric initial data shown in Table \ref{eq:2BInitialDatum} for the numeric simulations.
The corresponding value to a parameter in physical units can be calculated by multiplying the typical variable with the value used in the numerics (for reasonability of the data see \cite{tsaneva-atanasova_quantifying_2009}, \cite{urbina_spatiotemporal_2018}, \cite{pfenninger_plasma_2009}).

The potentials $V_a(x) = 1.75~x$ and $V_r(x) = -1.5~x$ translate to that fact that particles of type $a$ move \gls{anterograde} and particles of type $r$ move \gls{retrograde} with different velocities.
In Figure \ref{fig:BioFigureEF} b) the velocity of \gls{vesicle}s that are marked by \gls{Vamp2} is shown.
In practice different species of \gls{vesicle}s with different velocities have been observed (\cite{gumy_map2_2017}, \cite{schlager_bicaudal_2014}). 
As the range of the velocities of the \gls{anterograde} transport is $1 - 2.5$ $\mu \text{m}/ \text{sec}$ and the range of the velocity of \gls{retrograde} transport is  $1 - 2$ $\mu \text{m}/ \text{sec}$ and the typical velocity is 1 $\mu \text{m}/ \text{sec}$, we chose the mean of those ranges.
The diffusion constant $\epsilon$ is not biological meaningful as \gls{vesicle}s do not diffuse but the formal inclusion of this effects justifies to neglect reverse movement of \gls{vesicle}s.
The value of $\epsilon = 0.05$ is purely estimated.

For the maximal density in the pools we do not have a choice:
A neurite of $ 50 ~\mu$m length has volume $V_N = \pi r^2 h = 39,27~ \mu \text{m}^3$.
A \gls{vesicle} with diameter $d= 130$ nm has volume $V_v = \frac{4}{3} \pi r^3 = 0,00115035 ~ \mu \text{m}^3$.
Therefore, a neurite with a length of 50 $\mu$m can contain a maximum of 34 000 \gls{vesicle}s. The \gls{soma} is estimated to contain about 6000 \gls{vesicle}s and the pools in the \gls{growth cone}s at the tip of the neurites about 100 \gls{vesicle}s. 
Consequently, the mass of the \gls{vesicle}s in the \gls{soma} should be 0.175 times as big as the mass of \gls{vesicle}s in a $50~ \mu \text{m}$ long neurite, i.e. $\Lambda_{\text{som}}^{\max}=0.175$ and $\Lambda_{\text{N,1}}^{\max}=\Lambda_{\text{N,2}}^{\max} = 0,0029$.
\begin{table}[h]
	\begin{center}
	\begin{tabular}{c | c | c  | c}
		Variable & Typical Variable & Value in Numerics & Corresponds to \\
		\hline 
		$\Omega_1$ & 50 $\mu$m & [0,1] & [0,50 $\mu$m] \\
		$\Omega_2$ & 50 $\mu$m & [0,3] & [0,15 $\mu$m] \\
		$T$ & 100 s & 50 & 1 h 23 min  \\
		$a_{\text{N,1}}^0, a_{\text{N,2}}^0 $ & 15 $\text{ves}/ \mu \text{m}$ & 0.1 & 1.5 vesicles/$\mu$m  \\
		$r_{\text{N,1}}^0, r_{\text{N,2}}^0$ & 15 $\text{ves} /\mu \text{m}$ & 0.1 & 1.5 vesicles/$\mu$m  \\
		$\epsilon$ & $10^{-1} \mu\text{m}^2 / \text{sec}$ & 0.05 & $5\cdot 10^{-3} \mu \text{m}^2/ \text{sec}$  \\
		$\alpha_{r,1}, \alpha_{r,2},\alpha_{a,1}, \alpha_{a,2}$ & 1 $\text{ves} / \text{sec}$ & 0.8 & 0.8 vesicles/s \\
		$\beta_{r,1}, \beta_{a,2}, \beta_{r,2}, \beta_{a,2}$ & $10^{-1} \mu\text{m}/\text{sec}$ & 15 & 1.5 $\mu$m / s   \\
		$V_a(x)$ & 1 $\mu\text{m}^2/\text{sec}$ & $1.75 ~ x$ & $1.75 ~\mu$m/s \\
		$V_r(x)$ & 1 $\mu\text{m}^2/\text{sec}$ & $-1.5 ~x$ & $-1.5 ~\mu$m/s \\
		$\Lambda_{\text{som}}^0$ & - & 0.12 & 4114 vesicles \\
		$\Lambda_{\text{N,1}}^0, \Lambda_{\text{N,2}}^0$ & - & 0.0015 & 50 vesicles \\
		$\Lambda_{\text{som}}^{\max}$ & - & 0.175 & 6000 vesicles \\
		$\Lambda_{\text{N,1}}^{\max}, \Lambda_{\text{N,2}}^{\max}$ & - & 0.0029 & 100 vesicles 
	\end{tabular}
	\end{center}
	\caption{\textbf{Initial data:} Symmetric initial data used for the numerical simulation and their corresponding real data where the corresponding real data is the product of the typical variable and the value in the numerics by construction.}
	\label{eq:2BInitialDatum}
\end{table}

\subsection{Numeric Results}

In our numerical analyzation we performed two experiments with symmetric initial data for both neurites except for their length, see Table \ref{eq:2BInitialDatum}. 
In the first experiment the domains had a similar length, i.e. $\Omega_1 = [0,1], ~ \Omega_2 = [0, 0.9]$.
In the second experiment there was a significant difference in the length of the two domains, i.e. $\Omega_1 = [0,1], ~ \Omega_2 = [0, 0.3]$.

The results are shown in Figure \ref{fig:2BTimeEvolutionSimilarLength} and \ref{fig:2BTimeEvolutionDifferentLength}. 
Each component of the figures consists of three bars and two diagrams.
The left bar displays the total concentration of \gls{vesicle}s in $\Lambda_{\text{N,1}}$, the bar in the middle the value of $\Lambda_{\text{som}}$ and the right one the value of $\Lambda_{\text{N,2}}$.
The left diagram shows the current \gls{vesicle} concentration of \gls{anterograde} (green) and \gls{retrograde} (red) moving \gls{vesicle}s in neurite 1 and the right the one the concentration in neurite 2.

The aim of these experiments was the analysis of asymmetry-formation arising upon the basis of symmetric initial data.
In particular, we regard asymmetries in the \gls{vesicle} concentration in the \gls{growth cone}s as different growing potentials of the neurites.
Therefore, if the concentration in the pools $\Lambda_{\text{N,1}}$ and $\Lambda_{\text{N,2}}$ are not equal, the neurite with the higher concentration in the pool has a higher growing potential.
This can be justified by the fact that we assume that \gls{vesicle} merging with the membrane at the \gls{growth cone} drive the growing process of the neurite.

As visible in Figure \ref{fig:2BTimeEvolutionDifferentLength} (b) the length difference of the neurites leads to a quick symmetry breaking especially in the \gls{growth cone}s $\Lambda_{\text{N,1}}$ and $\Lambda_{\text{N,2}}$. 
Our numerical experiments even suggest that for a small initial length difference, there is nearly no asymmetry, see Figure \ref{fig:2BTimeEvolutionSimilarLength}.

\begin{figure}
	\begin{center}
		\begin{tabular}{cc}
(a) & \includegraphics[width=0.825\textwidth]{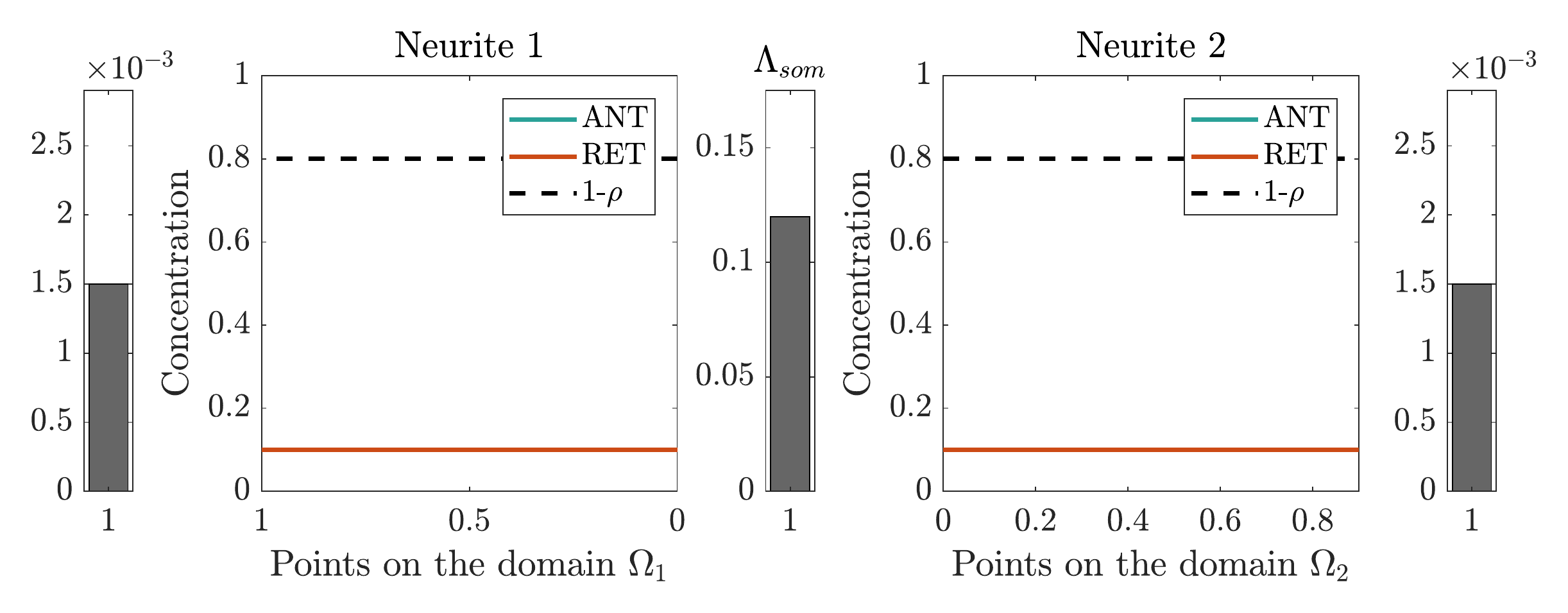} \\ 
(b) & \includegraphics[width=0.825\textwidth]{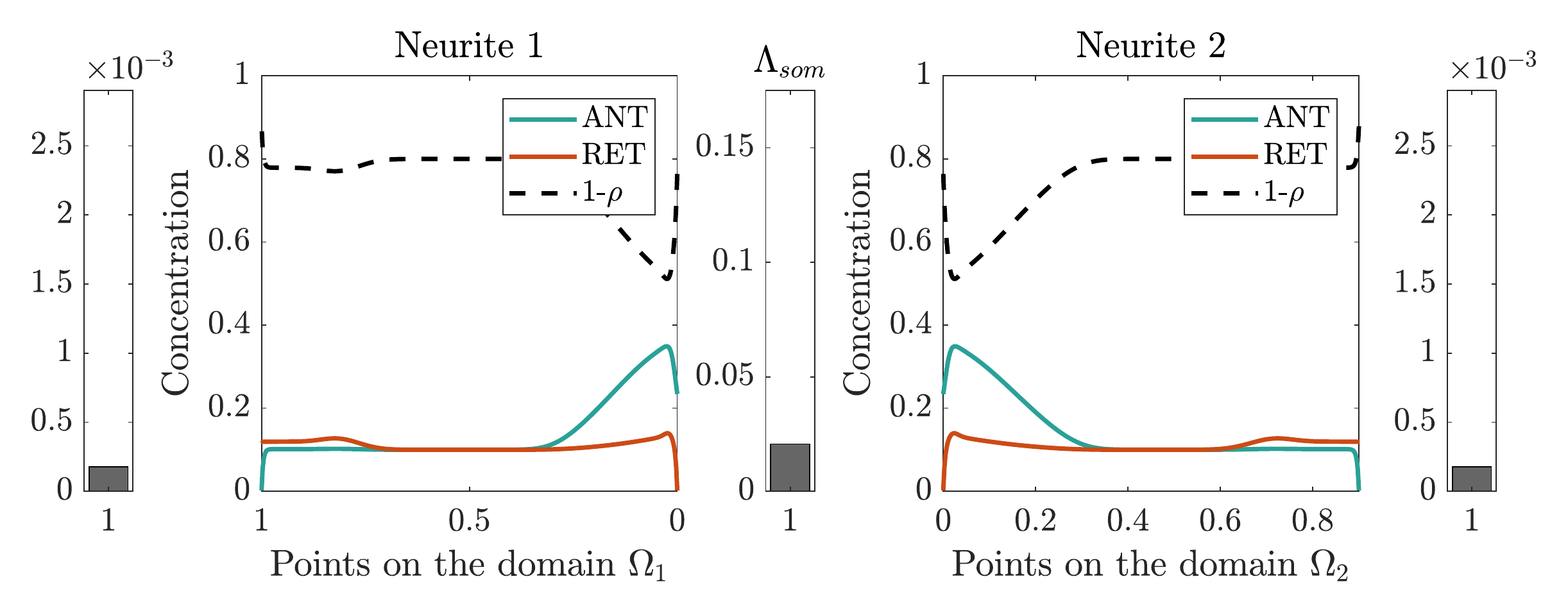} \\
(c) & \includegraphics[width=0.825\textwidth]{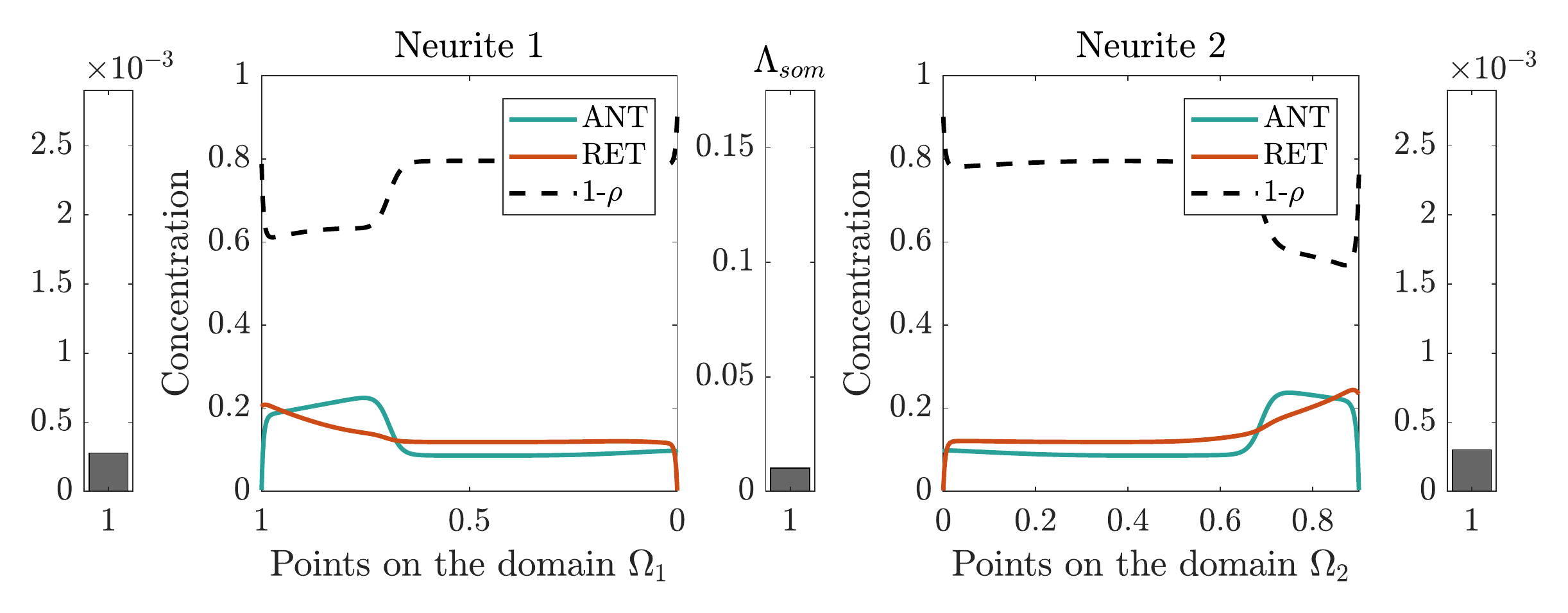} \\ 
(d) & \includegraphics[width=0.825\textwidth]{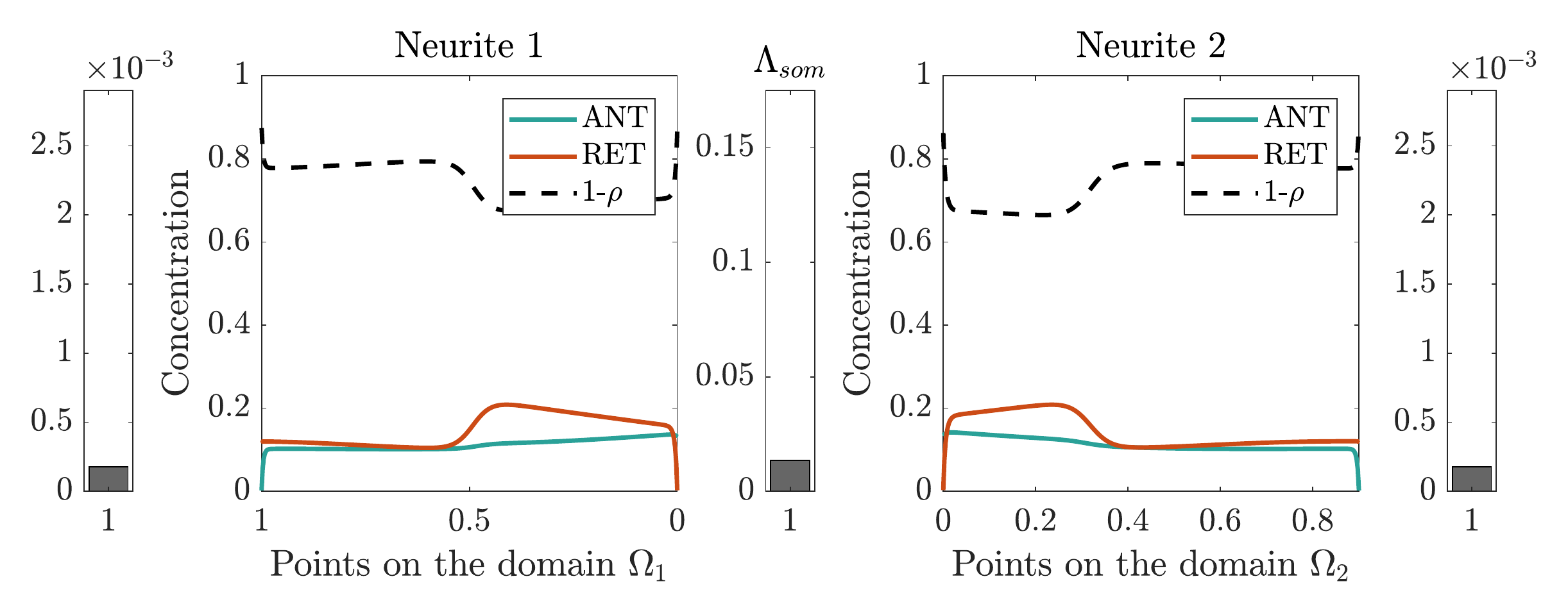} \\ 
		\end{tabular}
	\end{center}
	\caption{\textbf{Evolution over Time of the Vesicle Concentration in two Neurites with Nearly Similar Length:} The vesicle concentration for $\Omega_1 = [0,1], ~ \Omega_2 = [0, 0.9]$ and initial data \eqref{eq:2BInitialDatum}.
	(a) The initial concentration at $t=0$, (b) $t=10$, (c) $t=50$, (d) $T=100$.}
	\label{fig:2BTimeEvolutionSimilarLength}
\end{figure}

\begin{figure}
	\begin{center}
		\begin{tabular}{cc}
(a) & \includegraphics[width=0.825\textwidth]{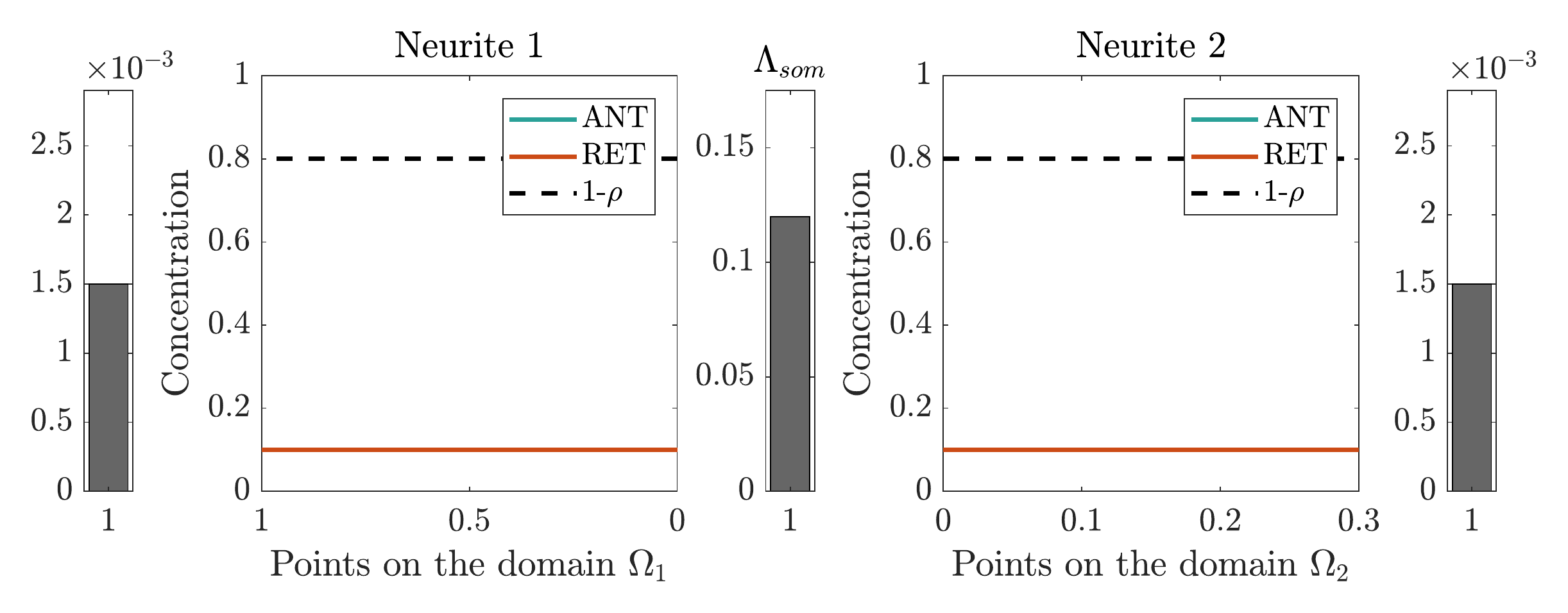} \\ 
(b) & \includegraphics[width=0.825\textwidth]{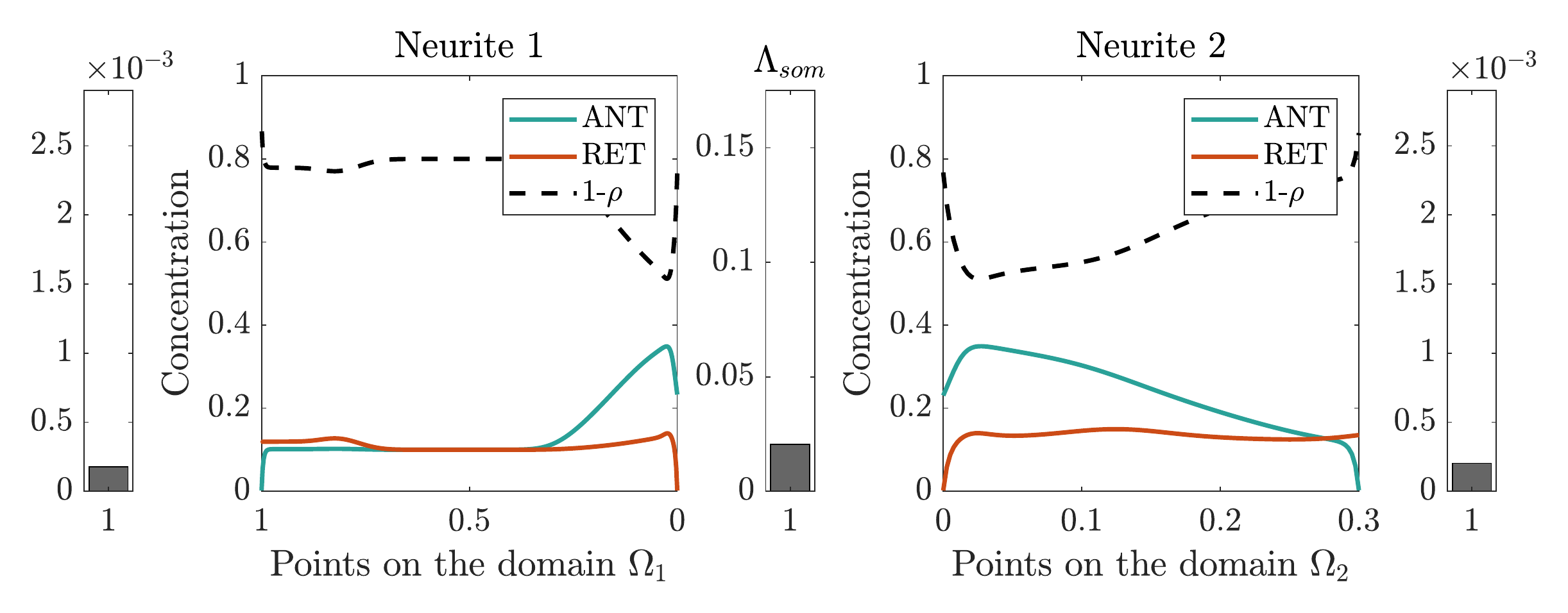} \\
(c) & \includegraphics[width=0.825\textwidth]{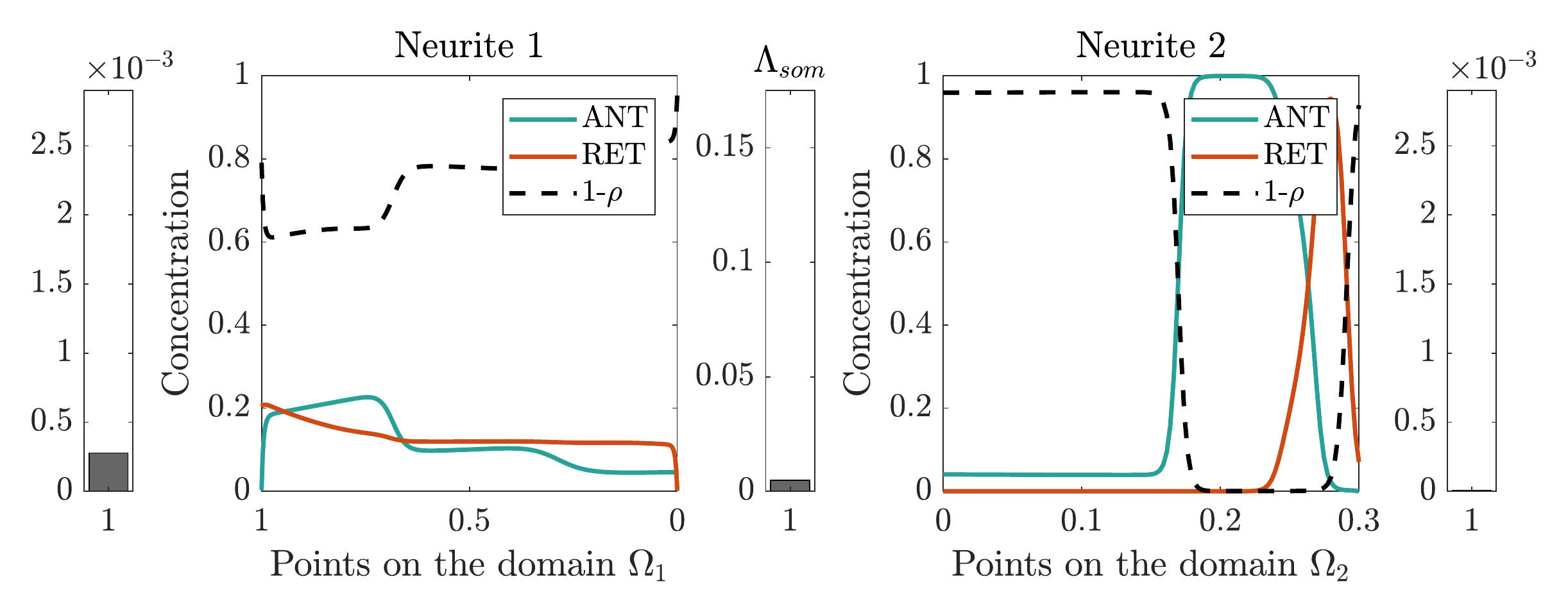} \\ 
(d) & \includegraphics[width=0.825\textwidth]{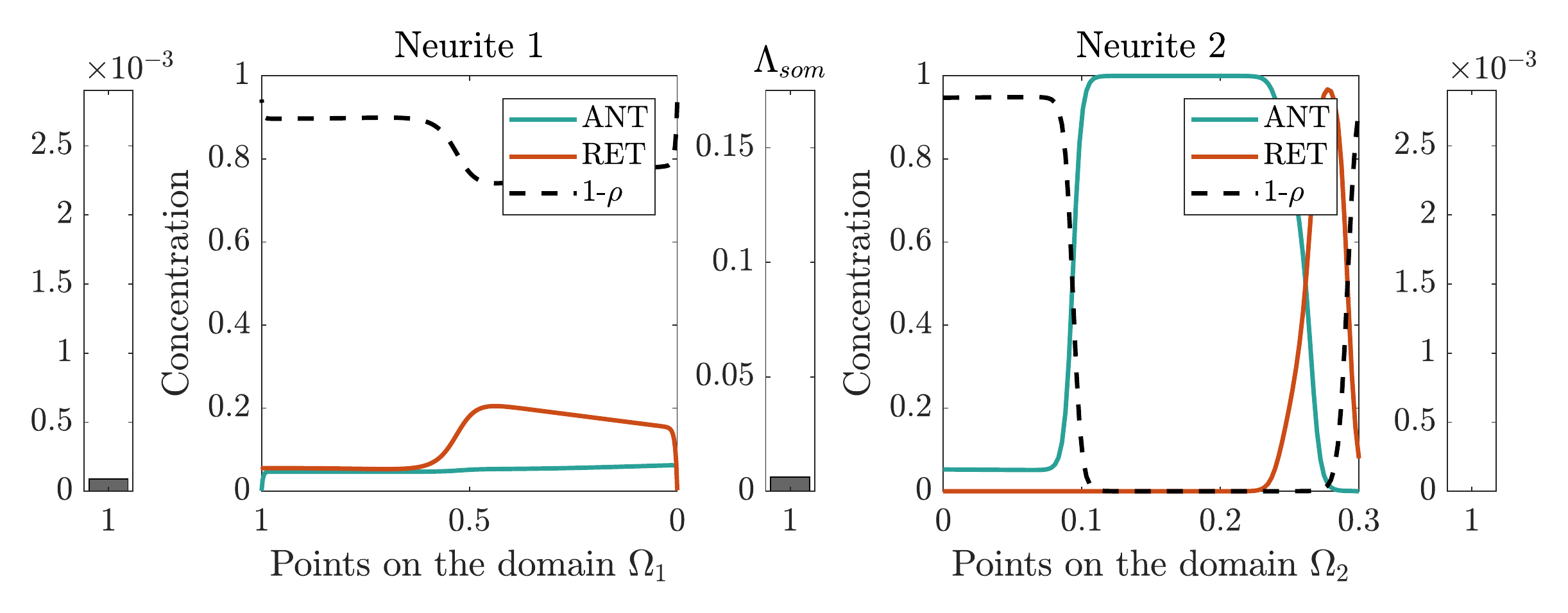} \\ 
		\end{tabular}
	\end{center}
	\caption{\textbf{Evolution over Time of the Particle Concentration in two Neurites with very Different Length:} The vesicle concentration for $\Omega_1 = [0,1], ~ \Omega_2 = [0, 0.3]$ and initial data \eqref{eq:2BInitialDatum}.
	(a) The initial concentration at $t=0$, (b) $t=10$, (c) $t=50$, (d) $T=100$.}
	\label{fig:2BTimeEvolutionDifferentLength}
\end{figure}

\begin{figure}[h]
	\begin{center}
		\begin{tabular}{cc}
			(a) & \includegraphics[width=0.9\textwidth]{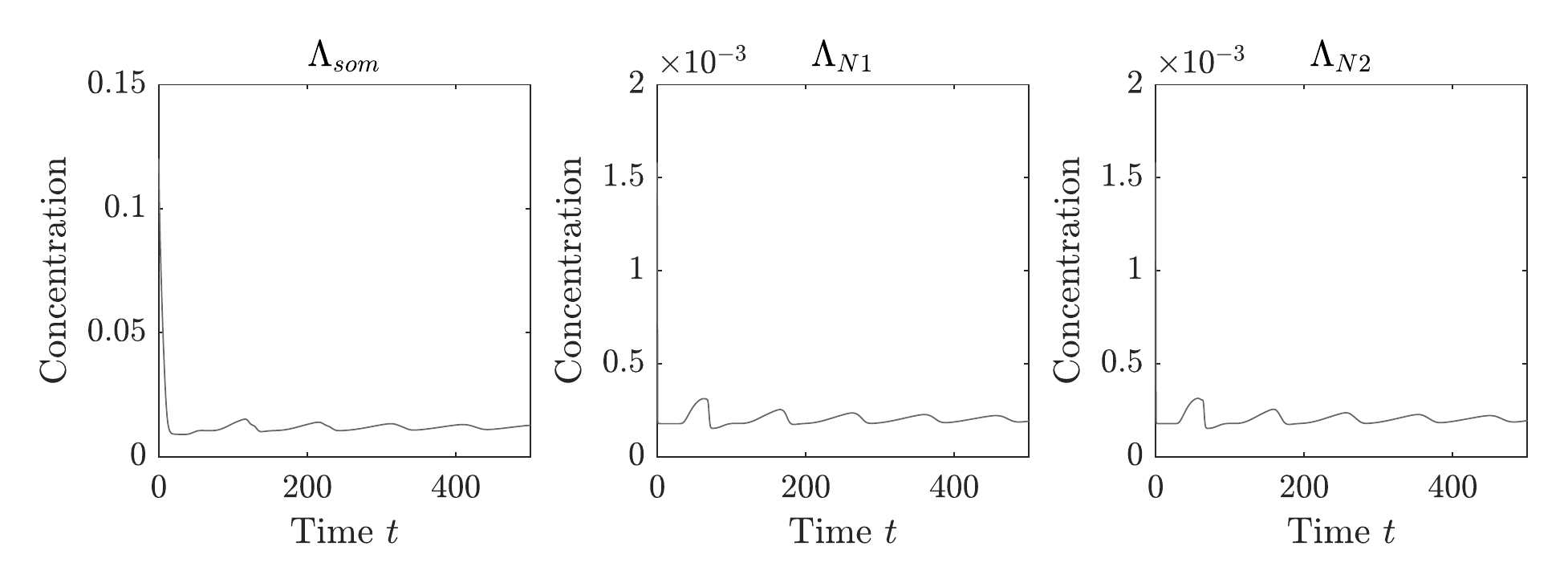} \\
			(b) &
			\includegraphics[width=0.9\textwidth]{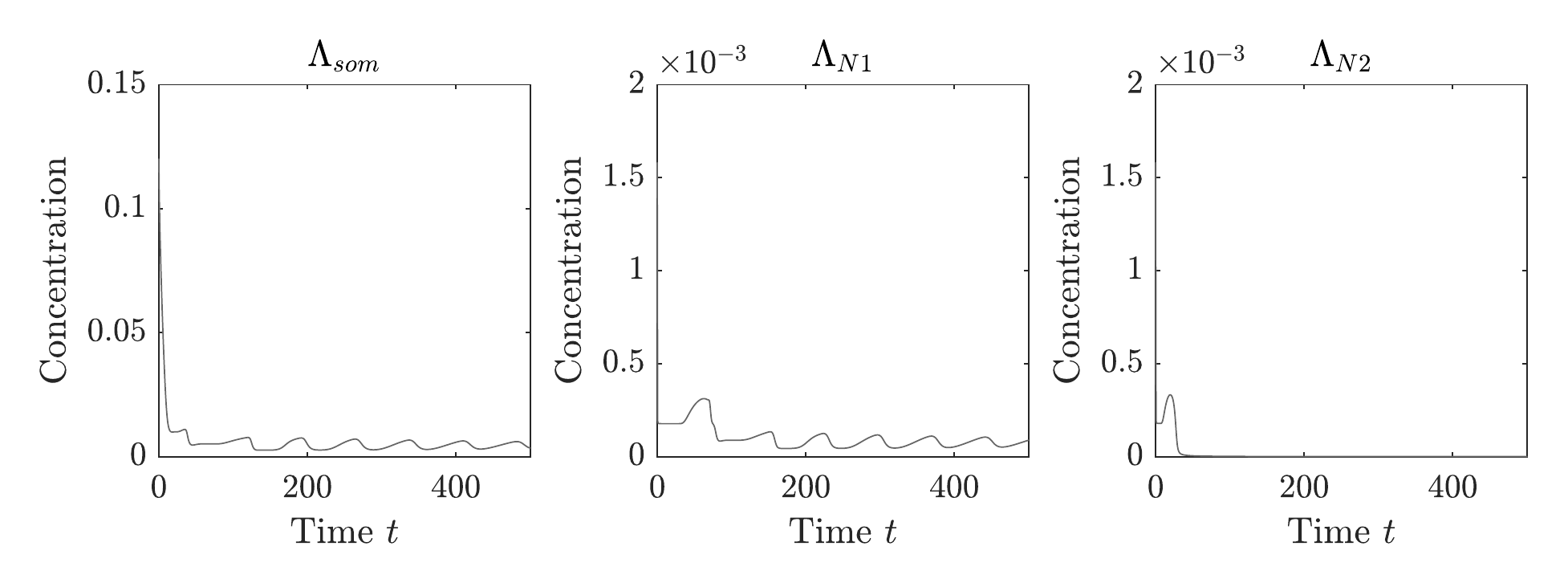}
		\end{tabular}
	\end{center}
	\caption{\textbf{Evolution over time of concentration in the pools} The time evolution of $\Lambda_{\text{som}}, \Lambda_{\text{N,1}}, \Lambda_{\text{N,2}}$ solving \eqref{2B:ODEPoolConcentrations}  and initial data \eqref{eq:2BInitialDatum}. (a) For $\Omega_1 = [0,1], ~ \Omega_2 = [0, 0.9]$, (b) for $\Omega_1 = [0,1], ~ \Omega_2 = [0, 0.3]$.}
	\label{fig:2BPoolConcentration}
\end{figure}

\subsection{Metastability in the Model}

Classically, in the context of dynamic systems a stable state without least energy is called a metastable state. 
Therefore the system stays in that state if no external energy is added whereas a certain amount of energy can result in further time evolution and the system coming to its true stable state with least energy. 

In our model in the case with very different initial lengths we monitor a quite similar but not equal effect where the particle concentration seems to have already converged to its equilibrium until a sudden rapid change in the \gls{vesicle}s concentration occurs at the tip of the longer neurite $(\Lambda_{\text{N,2}})$ after some time. 
With the initial data shown in Table \ref{eq:2BInitialDatum} and $\Omega_1 = [0,1], ~ \Omega_2 = [0, 0.3]$, this rapid fall in the pool concentration in neurite 2 happens between $t=26$ and $t=36$, see Figure \ref{fig:2BPoolConcentration} (b).
In the case where the initial length is nearly similar this feature does not occur, see Figure \ref{fig:2BPoolConcentration} (a).

\section{Conclusion}

\subsection{Discussion}
Different experimental approaches have shown that a neurite has to extend beyond a minimal length to become an axon (\cite{dotti_experimentally_1987}, \cite{goslin_experimental_1989}, \cite{yamamoto_differential_2012}). 
During axon specification, intracellular transport is polarized towards the nascent axon to allow its extension (\cite{schelski_neuronal_2017}, \cite{bradke_neuronal_1997}), where several molecular mechanisms have been proposed for the length-dependent specification of axons (\cite{albus_cell_2013}, \cite{rishal_motor-driven_2012}, \cite{schelski_neuronal_2017}). 
Our live cell imaging experiments indicate that the flow of \gls{vesicle}s into a neurite increases when it extends indicating that \gls{vesicle} transport rates depend on changes in neurite length. 
The transport along \glspl{microtubul} connects pools of \gls{vesicle}s in the cell body and at the tip of neurites. 
The number of \gls{vesicle}s in the pool at the tip of the neurites reflects their growth potential because it provides material for membrane expansion (\cite{pfenninger_regulation_2003}, \cite{pfenninger_plasma_2009}). 
Our simulations show that the number of \gls{vesicle}s in the pool is higher in the longer neurite of the model neuron once the length advantage of the longer neurite has exceeded a certain length.
In addition, the simulations show oscillations in the concentration in the growth cones that were also observed in polarizing neurons \cite{JACOBSON2006797,10.7554/eLife.12387,doi:10.1038/msb.2010.51}. The length-dependent effect on the size of \gls{vesicle} pool indicates that the coupling by bidirectional transport adjusts transport rates to neurite length. 

Although the model presented here does not capture all aspects of \gls{vesicle} transport and neurite extension, it suggests that the bidirectional transport of \gls{vesicle}s between \gls{soma} and \gls{growth cone} couples the different \gls{vesicle} pools in a way that results in a preferential transport into the growing neurite.

\subsection{Outlook}
Motivated by the fact that the length of neurites is changing during the \gls{polarization} process, the most urgent extension of our model is the feature of a growing and shrinking domain.
Therefore we have to consider a free boundary value problem, i.e. the lengths of the domains $\Omega_1(t)$ and $\Omega_2(t)$ become time dependent.

The difficulty arising in this method is the following: 
In reality, growing and shrinking is a continuous process but numeric simulations are always discrete. 
Thus in numeric simulations the domain has to shrink by segments but currently the approximation of what happens to the \gls{vesicle}s located on these intervals is not obvious as in reality situations like these never occur.

Furthermore, as neurites grow by \gls{exocytosis} of \gls{vesicle}s in the membrane, a production term of \gls{vesicle}s in the \gls{soma} is necessary since at present the total mass of \gls{vesicle}s is constant which prevents the neurite from intensive growth which requires a huge amount of vesicles. 
Furthermore, as there is a maximal concentration in the neurites it can happen that two waves of particles are pushing onto each other resulting in traffic jams that are biological not meaningful.
Consequently, we have to choose the values of the production term carefully in a way that jams around the \gls{soma} are prevented.
 
The second feature, that would be suggestive to include, is an age-based population structure, i.e. the probability of \gls{vesicle}s leaving the pool increases with the length of the duration of its stay in the pool. 
Currently \gls{vesicle}s that enter a pool can immediately leave it in the next time, but this additional delay could result in concentration oscillations in the \gls{growth cone}s that reflect the cycles of stochastically occurring periods of extension and retraction of neurites mentioned in the introduction.

Finally, as pointed out in remark \ref{rem:Analysis}, challenging analytical problems arise in the context of this model.

\vspace{2ex}
\noindent \textbf{Acknowledgments:} The authors acknowledge support by EXC 1003 Cells in Motion and EXC 2044 Mathematics Münster, Clusters of Excellence, Münster, funded by the German science foundation DFG. pEGFP VAMP2 was a gift from Thierry Galli (Addgene plasmid \# 42308; \url{http://n2t.net/addgene:42308}; RRID:Addgene\_42308). We thank Ulrich M\"uller (Scripps Research Institute, La Jolla, CA, USA) for pDcx-iGFP.
Furthermore the authors would like to thank Martin Burger (FAU) for several interesting discussions.

\appendix 
\section{Appendix}

\subsection{Pseudocode and Computing Time}
\label{Pseudocodes and Computing Time}

For a better understanding of the numerics, we give a small overview on the computing time and how the implementation works. 

As we only analyzed a one dimensional numerical problem, the solving algorithm is for sure very fast. 
The elapsed time for the algorithm for model neuron with pools was about 6 minutes for $T = 100 $ on MATLAB R2017b.
For a better understanding of the code a pseudocode is given in Algorithm \ref{pseudocode}. 
\begin{algorithm}
\DontPrintSemicolon
\SetAlgoLined
\SetKwInOut{Input}{Input}
\SetKwInOut{Init}{Init}
\vspace{1ex}
\Input{Typical values for all parameters; initial pool concentrations and their maximum capacity; influx values and outflux velocities; potentials $V_a, V_r$; parameter $\epsilon$; initial concentration of anteo- and retrograde moving vesicles in both neurites}
\Init{Grid on space and time discretisation; Initialize each neurite as a structure array $N_i$ consisting of its initial vesicle concentrations; influx rates and outflux velocities; initial values of neighbouring pools, empty array for pool development;}
\vspace{1ex}
Calculate Scaling Parameters; \\
Plot initial concentration
\\
\vspace{1ex}
\For{every time step}{
	 Update concentrations in $N_i$ with the particle-hopping algorithm;
	 \\
	\For{every $1000^{th}$ time step}{
		Update each figures that shows the current vesicles density in a neurite or a pool;}
	Save current pool concentrations in an array;	
	\\
	Update concentrations in the pools $\Lambda_{\text{som}}$ and $\Lambda_{N_i}$; 
} 
\vspace{1ex}
Plot development in the pools;
\caption{Solving the time evolution of the vesicle concentrations in the model neuron}
\label{pseudocode}
\end{algorithm}

\printnoidxglossaries 

\printglossary

\bibliographystyle{plain}

\begin{thebibliography}{10}

\bibitem{albus_cell_2013}
Christin~A. Albus, Ida Rishal, and Mike Fainzilber.
\newblock Cell length sensing for neuronal growth control.
\newblock {\em Trends in Cell Biology}, 23(7):305--310, July 2013.

\bibitem{ali2013modeling}
Saad Ali, Ko~Nishino, Dinesh Manocha, and Mubarak Shah.
\newblock Modeling, simulation and visual analysis of crowds: a
  multidisciplinary perspective.
\newblock In {\em Modeling, simulation and visual analysis of crowds}, pages
  1--19. Springer, 2013.

\bibitem{bradke_neuronal_1997}
F.~Bradke and C.~G. Dotti.
\newblock Neuronal polarity: vectorial cytoplasmic flow precedes axon
  formation.
\newblock {\em Neuron}, 19(6):1175--1186, December 1997.

\bibitem{PhysRevE.74.031910}
Paul~C. Bressloff.
\newblock Stochastic model of protein receptor trafficking prior to
  synaptogenesis.
\newblock {\em Phys. Rev. E}, 74:031910, 9 2006.

\bibitem{PhysRevLett.114.168101}
Paul~C. Bressloff and Ethan Levien.
\newblock Synaptic democracy and vesicular transport in axons.
\newblock {\em Phys. Rev. Lett.}, 114:168101, 4 2015.

\bibitem{burger.francesco.ea:nonlinear}
M.~Burger, M.~Di Francesco, J.F. Pietschmann, and B.~Schlake.
\newblock Nonlinear cross-diffusion with size exclusion.
\newblock {\em SIAM J. Math. Anal.}, 42(6):2842 -- 2871, 2010.

\bibitem{BurgerFrancescoPietSchlakeNonlinearCrossdiffusion}
Martin Burger, Marco Di~Francesco, Jan-Frederik Pietschmann, and B\"arbel
  Schlake.
\newblock Nonlinear cross-diffusion with size exclusion.
\newblock {\em SIAM J. Math. Anal.}, 42(6):2842--2871, 2010.

\bibitem{burger.pietschmann:flow*1}
Martin Burger and Jan-Frederik Pietschmann.
\newblock Flow characteristics in a crowded transport model.
\newblock {\em Nonlinearity}, 29(11):3528, 2016.

\bibitem{cooper_cell_2013}
Jonathan~A. Cooper.
\newblock Cell biology in neuroscience: mechanisms of cell migration in the
  nervous system.
\newblock {\em The Journal of Cell Biology}, 202(5):725--734, September 2013.

\bibitem{DERRIDA199865}
B.~Derrida.
\newblock An exactly soluble non-equilibrium system: The asymmetric simple
  exclusion process.
\newblock {\em Physics Reports}, 301(1):65 -- 83, 1998.

\bibitem{dotti_experimentally_1987}
Carlos~G. Dotti and Gary~A. Banker.
\newblock Experimentally induced alteration in the polarity of developing
  neurons.
\newblock {\em Nature}, 330(6145):254--256, November 1987.

\bibitem{ehrlacher.bakhta:cross-diffusion}
Virginie Ehrlacher and Athmane Bakhta.
\newblock Cross-diffusion systems with non-zero flux and moving boundary
  conditions.
\newblock {\em Mathematical Modelling and Numerical Analysis}, 2018.

\bibitem{evans1995asymmetric}
MR~Evans, DP~Foster, C~Godreche, and D~Mukamel.
\newblock Asymmetric exclusion model with two species: spontaneous symmetry
  breaking.
\newblock {\em Journal of statistical physics}, 80(1-2):69--102, 1995.

\bibitem{franco_reelin_2011}
Santos~J. Franco, Isabel Martinez-Garay, Cristina Gil-Sanz, Sarah~R.
  Harkins-Perry, and Ulrich Müller.
\newblock Reelin regulates cadherin function via {Dab}1/{Rap}1 to control
  neuronal migration and lamination in the neocortex.
\newblock {\em Neuron}, 69(3):482--497, February 2011.

\bibitem{goslin_experimental_1989}
K.~Goslin and G.~Banker.
\newblock Experimental observations on the development of polarity by
  hippocampal neurons in culture.
\newblock {\em The Journal of Cell Biology}, 108(4):1507--1516, April 1989.

\bibitem{gumy_map2_2017}
Laura~F. Gumy, Eugene~A. Katrukha, Ilya Grigoriev, Dick Jaarsma, Lukas~C.
  Kapitein, Anna Akhmanova, and Casper~C. Hoogenraad.
\newblock {MAP}2 {Defines} a {Pre}-axonal {Filtering} {Zone} to {Regulate}
  {KIF}1- versus {KIF}5-{Dependent} {Cargo} {Transport} in {Sensory} {Neurons}.
\newblock {\em Neuron}, 94(2):347--362.e7, April 2017.

\bibitem{gupton_integrin_2010}
Stephanie~L. Gupton and Frank~B. Gertler.
\newblock Integrin {Signaling} {Switches} the {Cytoskeletal} and {Exocytic}
  {Machinery} that {Drives} {Neuritogenesis}.
\newblock {\em Developmental Cell}, 18(5):725--736, May 2010.

\bibitem{hatanaka_excitatory_2013}
Yumiko Hatanaka and Kenta Yamauchi.
\newblock Excitatory cortical neurons with multipolar shape establish neuronal
  polarity by forming a tangentially oriented axon in the intermediate zone.
\newblock {\em Cerebral Cortex (New York, N.Y.: 1991)}, 23(1):105--113, January
  2013.

\bibitem{JACOBSON2006797}
Catherine Jacobson, Bruce Schnapp, and Gary~A. Banker.
\newblock A change in the selective translocation of the kinesin-1 motor domain
  marks the initial specification of the axon.
\newblock {\em Neuron}, 49(6):797 -- 804, 2006.

\bibitem{jungel:boundedness-by-entropy*5}
Ansgar J{\"u}ngel.
\newblock The boundedness-by-entropy method for cross-diffusion systems.
\newblock {\em Nonlinearity}, 28(6):1963, 2015.

\bibitem{kier2013cellular}
Lemont~B Kier, Robert Tombes, Lowell~H Hall, and Chao-Kun Cheng.
\newblock A cellular automata model of proton hopping down a channel.
\newblock {\em Chemistry \& biodiversity}, 10(3):338--342, 2013.

\bibitem{lasiecka_mechanisms_2011}
Zofia~M. Lasiecka and Bettina Winckler.
\newblock Mechanisms of polarized membrane trafficking in neurons -- focusing
  in on endosomes.
\newblock {\em Molecular and Cellular Neurosciences}, 48(4):278--287, December
  2011.

\bibitem{mangeol_kymographclear_2016}
Pierre Mangeol, Bram Prevo, and Erwin J.~G. Peterman.
\newblock {KymographClear} and {KymographDirect}: two tools for the automated
  quantitative analysis of molecular and cellular dynamics using kymographs.
\newblock {\em Molecular Biology of the Cell}, 27(12):1948--1957, April 2016.

\bibitem{Martinez-Arca9011}
Sonia Martinez-Arca, Rachel Rudge, Marcella Vacca, Gra{\c c}a Raposo, Jacques
  Camonis, V{\'e}ronique Proux-Gillardeaux, Laurent Daviet, Etienne
  Formstecher, Alexandre Hamburger, Francesco Filippini, Maurizio
  D{\textquoteright}Esposito, and Thierry Galli.
\newblock A dual mechanism controlling the localization and function of
  exocytic v-snares.
\newblock {\em Proceedings of the National Academy of Sciences},
  100(15):9011--9016, 2003.

\bibitem{namba_extracellular_2015}
Takashi Namba, Yasuhiro Funahashi, Shinichi Nakamuta, Chundi Xu, Tetsuya
  Takano, and Kozo Kaibuchi.
\newblock {\em Extracellular and {Intracellular} {Signaling} for {Neuronal}
  {Polarity}}, volume~95.
\newblock July 2015.

\bibitem{neumann_kymoanalyzer:_2017}
Sylvia Neumann, Romain Chassefeyre, George~E. Campbell, and Sandra~E. Encalada.
\newblock {KymoAnalyzer}: a software tool for the quantitative analysis of
  intracellular transport in neurons.
\newblock {\em Traffic (Copenhagen, Denmark)}, 18(1):71--88, 2017.

\bibitem{pfenninger_plasma_2009}
Karl~H. Pfenninger.
\newblock Plasma membrane expansion: a neuron's {Herculean} task.
\newblock {\em Nature Reviews. Neuroscience}, 10(4):251--261, April 2009.

\bibitem{pfenninger_regulation_2003}
Karl~H. Pfenninger, Lisandro Laurino, Diego Peretti, Xiaoxin Wang, Silvana
  Rosso, Gerardo Morfini, Alfredo Cáceres, and Santiago Quiroga.
\newblock Regulation of membrane expansion at the nerve growth cone.
\newblock {\em Journal of Cell Science}, 116(Pt 7):1209--1217, April 2003.

\bibitem{Pronina_2007}
Ekaterina Pronina and Anatoly~B Kolomeisky.
\newblock Spontaneous symmetry breaking in two-channel asymmetric exclusion
  processes with narrow entrances.
\newblock {\em Journal of Physics A: Mathematical and Theoretical},
  40(10):2275--2286, 2 2007.

\bibitem{quiroga_regulation_2018}
Santiago Quiroga, Mariano Bisbal, and Alfredo Cáceres.
\newblock Regulation of plasma membrane expansion during axon formation.
\newblock {\em Developmental Neurobiology}, 78(3):170--180, 2018.

\bibitem{rishal_motor-driven_2012}
Ida Rishal, Naaman Kam, Rotem Ben-Tov Perry, Vera Shinder, Elizabeth M.~C.
  Fisher, Giampietro Schiavo, and Mike Fainzilber.
\newblock A motor-driven mechanism for cell-length sensing.
\newblock {\em Cell Reports}, 1(6):608--616, June 2012.

\bibitem{schelski_neuronal_2017}
Max Schelski and Frank Bradke.
\newblock Neuronal polarization: {From} spatiotemporal signaling to
  cytoskeletal dynamics.
\newblock {\em Molecular and Cellular Neurosciences}, 84:11--28, 2017.

\bibitem{schlager_bicaudal_2014}
Max~A. Schlager, Andrea Serra-Marques, Ilya Grigoriev, Laura~F. Gumy, Marta
  Esteves~da Silva, Phebe~S. Wulf, Anna Akhmanova, and Casper~C. Hoogenraad.
\newblock Bicaudal d family adaptor proteins control the velocity of
  {Dynein}-based movements.
\newblock {\em Cell Reports}, 8(5):1248--1256, September 2014.

\bibitem{SchlakePhD2011}
B.A. Schlake.
\newblock {\em Mathematical Models for Particle Transport: Crowded Motion}.
\newblock PhD thesis, Westf{\"a}lische Wilhelms-Universit{\"a}t M{\"u}nster,
  2011.

\bibitem{shah_rap1_2017}
Bhavin Shah, Daniela Lutter, Yaroslav Tsytsyura, Natalia Glyvuk, Akira
  Sakakibara, Jürgen Klingauf, and Andreas~W. Püschel.
\newblock Rap1 {GTPases} {Are} {Master} {Regulators} of {Neural} {Cell}
  {Polarity} in the {Developing} {Neocortex}.
\newblock {\em Cerebral Cortex (New York, N.Y.: 1991)}, 27(2):1253--1269, 2017.

\bibitem{simpson.hughes.ea:diffusion}
M~J Simpson, B~D Hughes, and K~A Landman.
\newblock Diffusion populations: Ghosts or folks.
\newblock {\em Australasian Journal of Engineering Education}, 15(2100):59--68,
  2009.

\bibitem{tojima_exocytic_2015}
Takuro Tojima and Hiroyuki Kamiguchi.
\newblock Exocytic and endocytic membrane trafficking in axon development.
\newblock {\em Development, Growth \& Differentiation}, 57(4):291--304, May
  2015.

\bibitem{doi:10.1038/msb.2010.51}
Michinori Toriyama, Yuichi Sakumura, Tadayuki Shimada, Shin Ishii, and Naoyuki
  Inagaki.
\newblock A diffusion-based neurite length-sensing mechanism involved in
  neuronal symmetry breaking.
\newblock {\em Molecular Systems Biology}, 6(1):394, 2010.

\bibitem{tsaneva-atanasova_quantifying_2009}
Krasimira Tsaneva-Atanasova, Andrea Burgo, Thierry Galli, and David Holcman.
\newblock Quantifying neurite growth mediated by interactions among secretory
  vesicles, microtubules, and actin networks.
\newblock {\em Biophysical Journal}, 96(3):840--857, 2 2009.

\bibitem{urbina_spatiotemporal_2018}
Fabio~L. Urbina, Shawn~M. Gomez, and Stephanie~L. Gupton.
\newblock Spatiotemporal organization of exocytosis emerges during neuronal
  shape change.
\newblock {\em The Journal of Cell Biology}, 217(3):1113--1128, 2018.

\bibitem{PhysRevE.97.022105}
Atul~Kumar Verma, Natasha Sharma, and Arvind~Kumar Gupta.
\newblock Far-from-equilibrium bidirectional transport system with constrained
  entrances competing for pool of limited resources.
\newblock {\em Phys. Rev. E}, 97:022105, 2 2018.

\bibitem{wang_lgl1_2011}
Tong Wang, Yang Liu, Xiao-Hui Xu, Cai-Yun Deng, Kong-Yan Wu, Ji~Zhu, Xiu-Qing
  Fu, Miao He, and Zhen-Ge Luo.
\newblock Lgl1 activation of rab10 promotes axonal membrane trafficking
  underlying neuronal polarization.
\newblock {\em Developmental Cell}, 21(3):431--444, September 2011.

\bibitem{10.7554/eLife.20556}
Alex~H Williams, Cian O'Donnell, Terrence~J Sejnowski, and Timothy O'Leary.
\newblock Dendritic trafficking faces physiologically critical speed-precision
  tradeoffs.
\newblock {\em eLife}, 5:e20556, 12 2016.

\bibitem{winans_waves_2016}
Amy~M. Winans, Sean~R. Collins, and Tobias Meyer.
\newblock Waves of actin and microtubule polymerization drive microtubule-based
  transport and neurite growth before single axon formation.
\newblock {\em eLife}, 5:e12387, 2 2016.

\bibitem{10.7554/eLife.12387}
Amy~M Winans, Sean~R Collins, and Tobias Meyer.
\newblock Waves of actin and microtubule polymerization drive microtubule-based
  transport and neurite growth before single axon formation.
\newblock {\em eLife}, 5:e12387, 2 2016.

\bibitem{yamamoto_differential_2012}
Hideaki Yamamoto, Takanori Demura, Mayu Morita, Gary~A. Banker, Takashi Tanii,
  and Shun Nakamura.
\newblock Differential neurite outgrowth is required for axon specification by
  cultured hippocampal neurons.
\newblock {\em Journal of Neurochemistry}, 123(6):904--910, December 2012.

\end{thebibliography}

\end{document}